\newcommand{\Oof}{\mathcal{O}}
\newcommand{\CCC}{\mathcal{C}}
\newcommand{\QQQ}{\mathcal{Q}}
\renewcommand{\ker}{\mathrm{ker}}
\newcommand{\cl}{\mathrm{cl}}
\newcommand{\fker}{f_{\ker}}
\newcommand{\fproj}{f_{\mathrm{proj}}}
\newcommand{\fcl}{f_{\cl}}
\newcommand{\fpaths}{f_{\mathrm{pth}}}
\newcommand{\fcore}{f_{\mathrm{core}}}
\newcommand{\N}{\mathbb{N}}
\renewcommand{\phi}{\varphi}
\renewcommand{\epsilon}{\varepsilon}
\newcommand{\minor}{\preccurlyeq}
\newcommand{\dist}{\mathrm{dist}}
\newcommand{\projprof}{\widehat{\mu}}
\newcommand{\abs}[1]{\ensuremath{\left\lvert#1\right\rvert}}
\title{Reconfiguration on nowhere dense graph classes}
\author{Sebastian Siebertz\thanks{Supported by the National Science Centre of 
Poland via POLONEZ grant agreement UMO-2015/19/P/ST6/03998, 
which has received funding from the European Union's Horizon 2020 research and 
innovation programme (Marie Sk\l odowska-Curie grant agreement No.\ 665778).}\\
\small Institute of Informatics, University of Warsaw, Poland\\[-0.8ex]
\small\tt siebertz@mimuw.edu.pl
}
\begin{document}

\maketitle
\begin{picture}(0,0) \put(393,-385)
{\hbox{\includegraphics[scale=0.25]{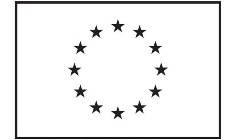}}} \end{picture} 
\vspace{-0.8cm}


\begin{abstract}
\noindent 
Let $\mathcal{Q}$ be a vertex subset problem on graphs. In a 
reconfiguration variant of $\QQQ$ we are given a graph $G$ and 
two feasible solutions $S_s, S_t\subseteq V(G)$ of $\mathcal{Q}$
with $|S_s|=|S_t|=k$. The problem is to determine whether
there exists a sequence $S_1,\ldots,S_n$ of 
feasible solutions, where $S_1=S_s$, $S_n=S_t$, 
$|S_i|\leq k\pm 1$, and  each $S_{i+1}$ results from~$S_i$, 
$1\leq i<n$, by the addition or removal of a single vertex.
We prove that for every nowhere dense class of graphs 
and for every integer $r\geq 1$ there exists a polynomial~$p_r$ such 
that the reconfiguration variants of the \mbox{distance-$r$}
independent set problem and the distance-$r$ dominating
set problem admit kernels of size~$p_r(k)$. If~$k$ is equal to the size
of a minimum distance-$r$ dominating set, then for 
any fixed $\epsilon>0$ we even obtain
a kernel of almost linear size $\mathcal{O}(k^{1+\epsilon})$. 
We then prove that if a class~$\mathcal{C}$ is somewhere dense and 
closed under taking subgraphs, then for some value of $r\geq 1$ the reconfiguration variants of the above problems on $\mathcal{C}$ are 
\mbox{$\mathsf{W}[1]$-hard} (and in particular we cannot expect the existence of kernelization algorithms). 
Hence our results show that the limit of tractability for
the reconfiguration variants of the 
distance-$r$ independent set problem and distance-$r$ dominating set problem on subgraph closed graph classes lies exactly on the boundary
between nowhere denseness and somewhere denseness.

  \bigskip\noindent \textbf{Keywords:} Reconfiguration; dominating set; independent set; sparse graph classes; nowhere dense graphs
\end{abstract}

\hfill
\pagebreak

\section{Introduction}

In the reconfiguration framework we are not asked to 
find a feasible solution to an optimization problem $\QQQ$, 
but rather to transform a source feasible solution $S_s$
into a more desirable feasible target solution $S_t$
such that each intermediate solution is also feasible. 
This framework allows to model real-world dynamic situations 
in which we need to transform one valid system state
into another and it is crucial that the system keeps running
in all intermediate states. 


The literature focuses mainly on the 
problem of determining the existence of a reconfiguration 
sequence between two given solutions; an even more
difficult problem is to actually find a (possibly minimum-length) 
reconfiguration sequence of solutions.
Typically there are exponentially many feasible solutions
to an instance $I$, and not surprisingly, 
the above problem has been shown to be 
\textsc{PSpace}-complete for the
reconfiguration variants of many important 
\textsc{NP}-complete problems. 

Reconfiguration problems received considerable attention
in recent literature. The studied problems
include \textsc{Vertex Coloring}~\cite{bonamy2013recoloring,
bonsma2009finding,bonsma2014complexity,cereceda2008connectedness,
cereceda2011finding,cereceda2009mixing}, \textsc{List Edge-Coloring}~\cite{ito2012reconfiguration}, \textsc{Vertex Cover}~\cite{mouawad2014vertex,mouawad2014reconfiguration}, \textsc{Independent Set}~\cite{bousquet2017token,hearn2005pspace,ito2011complexity,ito2014fixed,
lokshtanov2017complexity}, \textsc{Clique, Set Cover, Integer Programming, Matching, Spanning Tree, Matroid Ba\-ses}~\cite{ito2011complexity}, \textsc{Satisfiability}~\cite{gopalan2009connectivity,mouawad2015shortest,schwerdtfeger2013computational}, \textsc{Shortest Path}~\cite{bonsma2013complexity, kaminski2011shortest}, \textsc{Subset Sum}~\cite{ito2011approximability}, \textsc{Dominating Set}~\cite{bousquet2017token, haas2014k, haddadan2015complexity, mouawad2017parameterized}, \textsc{Odd Cycle Transversal}, \textsc{Feedback Vertex Set}, and \textsc{Hitting Set}~\cite{mouawad2017parameterized}. We refer to the surveys~\cite{nishimura2017introduction,van2013complexity} and the
thesis~\cite{mouawad2015reconfiguration} for a detailed overview.

A systematic study of the parameterized complexity of reconfiguration problems
was initiated by Mouawad et al.~\cite{mouawad2017parameterized}. 
The authors study mostly graph theoretical vertex subset problems, 
that is, solutions consist of subsets $S\subseteq V(G)$ of the
input graph~$G$. For such problems, one natural parameterization
is the parameter $k$, a bound on the size of feasible solutions, 
another natural parameter is $\ell$, the length of the reconfiguration sequence. 
They proved that \textsc{Feedback Vertex Set}
and \textsc{Bounded Hitting Set} (where the cardinality of
each set from the input is bounded) admit polynomial reconfiguration
kernels (parameterized by~$k$). Concerning lower bounds, 
they proved that reconfiguration of \textsc{Dominating Set} is
$\mathsf{W}[2]$-hard when parameterized by $k+\ell$, as well 
as a general result on reconfigurations of hereditary properties and
their parametric duals, implying $\mathsf{W}[1]$-hardness of
reconfiguration of \textsc{Independent Set, Induced Forest} and
\textsc{Bipartite Subgraph} parameterized by $k+\ell$, and \textsc{Vertex Cover, Feedback Vertex Set,} and \textsc{Odd Cycle Transversal} parameterized by $\ell$. 

In this work we consider the token addition and removal (TAR) 
model of reconfiguration. 
In this model, for a vertex subset
problem $\QQQ$ on graphs, we are given a graph~$G$ and 
two feasible solutions $S_s, S_t\subseteq V(G)$ of $\QQQ$
with $|S_s|=|S_t|=k$. The problem is to determine whether
there exists a sequence $S_1,\ldots,S_n$ of 
feasible solutions, where $S_1=S_s$, $S_n=S_t$, each 
$S_i$ has size $k$ or $k-1$ if $\QQQ$ is a maximization problem
and size $k$ or $k+1$ if $\QQQ$ is a minimization problem, and 
each $S_{i+1}$ results from $S_i$, $1\leq i<n$, 
by adding or removing exactly one vertex. 
\textsc{Independent Set Reconfiguration} is
known to be \textsc{PSpace}-complete on graphs of bounded 
bandwidth~\cite{mouawad2014reconfiguration,wrochna2014reconfiguration}
and W[1]-hard parameterized by $k$ on general graphs~\cite{ito2014parameterized}. 
On the positive side, the problem was shown to be 
fixed-parameter tractable, with parameter $k$, for graphs of 
bounded degree, planar graphs, and graphs excluding $K_{3,d}$ as a 
subgraph, for any constant~$d$~\cite{ito2014fixed,ito2014parameterized}. This result was 
extended by Lokshtanov et al.~\cite{lokshtanov2015reconfiguration} 
to graphs of bounded degeneracy and nowhere dense graphs. 
Lokshtanov~et~al.\ also proved that \textsc{Dominating Set Reconfiguration} is
$\textsf{W}[1]$-hard parameterized by $k+\ell$ on general graphs and fixed-parameter tractable, 
with parameter~$k$, for graphs excluding $K_{d,d}$ 
as a subgraph, for any constant $d$ (in particular on degenerate 
graph classes and nowhere dense classes).

Nowhere dense graph classes, which are also the object of
study in the present paper, are very general classes of uniformly
sparse graphs~\cite{nevsetvril2010first,nevsetvril2011nowhere}. 
Many
familiar classes of sparse graphs, like planar
graphs, graphs of bounded treewidth, graphs of bounded degree, and,
in fact, all classes that exclude a fixed (topological) minor, are nowhere
dense. Notably, classes of bounded average degree or bounded
degeneracy are not necessarily nowhere dense. In an algorithmic context this is
reasonable, as every graph can be turned into a graph of
degeneracy at most~$2$ by subdividing every edge once; however, the
structure of the graph is essentially preserved under this operation.
In our context, a particularly interesting algorithmic result states
that every first-order definable property of graphs can be 
decided in almost linear time on nowhere dense graph 
classes~\cite{grohe2014deciding}. This result implies that 
the reconfiguration variants of many of the above mentioned 
vertex subset problems are fixed-parameter tractable with
respect to parameter $k+\ell$ on every nowhere dense
graph class (the existence of a reconfiguration sequence 
can be expressed with $\Oof(k\cdot \ell)$ quantifiers in 
first-order logic, whenever the property itself can be
defined by a first-order formula), and by the result
of~\cite{grohe2014deciding} can be decided in fixed-parameter
time. 

Nowhere dense graph classes play a special role
for \textsc{Dominating Set} and its more general 
variant \textsc{Distance-$r$ Dominating Set}.
A distance-$r$ dominating set in a graph~$G$ 
(for a fixed integer parameter
$r$) is a set $D\subseteq V(G)$ such that 
every vertex of $G$ is at distance at most $r$ 
to a vertex from $D$. 
\textsc{Distance-$r$ Dominating Set} was shown 
to be fixed-parameter tractable on nowhere dense
classes in~\cite{DawarK09} (this result is again implied
by the more general result of~\cite{grohe2014deciding} which 
was obtained later). 
It was then shown that nowhere dense classes are the
limit of tractability based on sparsity methods, more precisely, 
it was shown in~\cite{drange2016kernelization} that 
if a class $\CCC$ is not nowhere dense and closed under taking
subgraphs, then there is some $r\geq 1$ such that
\textsc{Distance-$r$ Dominating Set} on~$\CCC$ is
$\mathsf{W}[2]$-hard. It was later shown that the problem
admits a polynomial kernel~\cite{siebertz2016polynomial} 
and in fact an almost linear kernel~\cite{eickmeyer2016neighborhood} on nowhere dense classes. 

A kernelization algorithm, or just a kernel, is a polynomial time
algorithm which transforms an input instance $(G,k)$ of a 
parameterized problem to 
an equivalent instance $(G',k')$ such that $|G'|+k'\leq f(k)$
for some function $f$. Hence for a reconfiguration problem
a kernelization algorithm is a polynomial time algorithm which
transforms every input instance $(G,k,S_s,S_t)$ into an 
instance $(G',k', S_s',S_t')$ with $|G'|+k'\leq f(k)$ for some function~$f$ and such that there exists a valid
reconfiguration sequence $S_1=S_s,S_2,\ldots,S_n=S_t$ in $G$
if and only if there exists a valid reconfiguration
sequence $S_1'=S_s',S_2',\ldots, S_m'=S_t'$ in $G'$. 
Every fixed-parameter
tractable problem admits a kernel, however, possibly of exponential 
or worse size. On the reduced instance $(G',k', S_s',S_t')$
one can then run a brute force algorithm to decide whether
the initial instance was a positive instance. 

\subsection{Our results.}

We prove that for every nowhere dense class of graphs 
and for every $r\geq 1$ there exists a polynomial $p_r$ such 
that \textsc{Distance-$r$ Independent Set Reconfiguration} 
and  \textsc{Distance-$r$ Dominating Set Reconfiguration} 
admit kernels of size $p_r(k)$. For \textsc{Distance-$r$ Dominating Set Reconfiguration}, if~$k$ is equal to the size
of a minimum distance-$r$ dominating set of $G$, 
then for any fixed $\epsilon>0$ we even obtain kernels of 
almost linear size $\Oof(k^{1+\epsilon})$. 

%
For \textsc{Distance-$r$ Domination Set Reconfiguration}
there is a technical subtlety that 
prevents us from reducing the input instance $(G,k,I_s,I_t)$ 
to an equivalent instance $(G',k,I_s,I_t)$ such that~$G'$ is a 
subgraph of $G$. Instead, we can kernelize to an 
annotated version of the problem, where only a given subset of vertices of $G'$ needs to be dominated, or we can output
an instance $(G',k,I_s,I_t)$, where $G'$ does not belong
to the class~$\CCC$ under consideration 
(its density parameters are only slightly
larger than those of $G$, though). Formally, in any case, 
we do not reduce
to the same problem, hence we compute only a so-called 
\emph{bi-kernel} for the problem. 
Our results generalize the
earlier mentioned results of Lokshtanov et al.~\cite{lokshtanov2015reconfiguration}, 
who proved that 
\textsc{Independent Set Reconfiguration} (i.e.\ the case $r=1$) 
is fixed-parameter tractable
on every nowhere dense graph class and \textsc{Dominating Set Reconfiguration} 
(i.e.\ the case $r=1$) 
is fixed-parameter tractable if the input graph does not contain large 
complete bipartite graphs (as a subgraph), 
in particular on all nowhere dense graph classes.

Our methods for \textsc{Distance-$r$ Independent Set 
Reconfiguration} generalize
those of Lokshtanov et al.~\cite{lokshtanov2015reconfiguration}
for \textsc{Independent Set Reconfiguration}
to the more general setting of distance-$r$ independence. They 
are strongly based on the equivalence of nowhere denseness 
and uniform quasi-wideness (a notion that will be defined in the next
section) and polynomial bounds for the quasi-wideness 
functions which were recently obtained by Kreutzer et 
al.~\cite{siebertz2016polynomial} and Pilipczuk et al.~\cite{pilipczuk2017wideness}. 

Our methods for \textsc{Distance-$r$ Dominating Set Reconfiguration} combine
the approach of Lokshtanov et al.~\cite{lokshtanov2015reconfiguration} for \textsc{Dominating Set Reconfiguration}
with new methods developed for the kernelization of \textsc{Distance-$r$ Dominating Set} on nowhere dense graph
classes by Eickmeyer et al.~\cite{eickmeyer2016neighborhood}. 

We then prove that if a class~$\CCC$ is somewhere dense and 
closed under taking subgraphs, then for some value of $r\geq 1$ the reconfiguration variants for these problems on $\CCC$ are $\mathsf{W}[1]$-hard (and in particular we cannot expect the
existence of kernelization algorithms). 
Hence our results show that the limit of tractability for
\textsc{Distance-$r$ Independent Set Reconfiguration} 
and \textsc{Distance-$r$ Dominating Set Reconfiguration} 
on subgraph closed graph classes lies exactly 
on the boundary
between nowhere denseness and somewhere denseness.
Our hardness results are rather straightforward generalizations of the
$\mathsf{W}[1]$-hardness proofs known for 
\textsc{Independent Set} and \textsc{Dominating Set} to 
their distance-$r$ variants. 

\section{Preliminaries}

\textbf{Graphs.}
All graphs in this paper 
are finite, undirected and simple.
Our notation is standard, we 
refer to the textbook~\cite{diestel2012graph} for 
more background on graphs.  We write $V(G)$
for the vertex set of a graph~$G$ and $E(G)$ for its
edge set. For $r\in\N$, a graph~$G$ and
$v\in V(G)$ we write $N_r(v)$ for the set of vertices
of $G$ at distance at most~$r$ from~$v$. The 
radius of $G$ is the minimum integer $r$ such that
there is $v\in V(G)$ with $N_r(v)=V(G)$. 

\medskip
\textbf{Independent sets and dominating sets.}
Let $G$ be a graph and $r\in \N$. 
A set $B\subseteq V(G)$ is called {\em{$r$-independent}} in~$G$ if for all
distinct $u,v\in B$ we have $\dist_G(u,v)>r$. The set~$B$ 
is a \emph{distance-$r$ dominating set} in $G$ if 
$N_r(B)=\bigcup_{v\in B}N_r(v)=V(G)$. 

\medskip
\textbf{Minors and subdivisions.}
Let $G$ be a graph and let $r\in \N$. 
A graph~$H$ with vertex set
$\{v_1,\ldots, v_n\}$ is a \emph{depth-$r$ minor} of~$G$, written
$H\minor_r G$, if there are connected and pairwise vertex disjoint
subgraphs $H_1,\ldots, H_n\subseteq G$, each of radius at most $r$,
such that if $v_iv_j\in E(H)$, then there are $w_i\in V(H_i)$ and
$w_j\in V(H_j)$ with $w_iw_j\in E(G)$.

An \emph{$r$-subdivision} of $H$ is obtained by replacing edges of $H$
by internally vertex disjoint paths of length (exactly) $r$.  We write 
$H_r$ for the $r$-subdivision of $H$.

\medskip
\textbf{Nowhere denseness.}
  A class $\CCC$ of graphs is \emph{nowhere dense} if there exists a
  function $t\colon \N\rightarrow \N$ such that $K_{t(r)}\not\minor_r G$ 
  for all $r\in\N$ and for all $G\in \CCC$. Otherwise, 
  $\CCC$ is called \emph{somewhere dense}.


\medskip
\textbf{Uniform quasi-wideness.}
  A class $\CCC$ of graphs is called \emph{uniformly quasi-wide} if there are
  functions $N\colon \N\times\N\rightarrow \N$ and $s:\N\rightarrow \N$ such
  that for all $r,m\in \N$ and all subsets $A\subseteq V(G)$ for
  $G\in \CCC$ of size $\abs{A}\geq N(r,m)$ there is a set
  $S\subseteq V(G)$ of size $\abs{S}\leq s(r)$ and a set
  $B\subseteq A\setminus S$ of size $\abs{B}\geq m$ which is $r$-independent in
  $G-S$.  

\smallskip
It was shown by Ne\v{s}et\v{r}il and Ossona de
Mendez~\cite{nevsetvril2011nowhere} that a class $\CCC$ of graphs is
nowhere dense if and only if it is uniformly quasi-wide. Quasi-wideness
is a very useful property for distance-$r$ domination, as large $2r$-independent
sets are natural obstructions for small distance-$r$ dominating sets. 
For us it will be important that the function~$N$ can be assumed 
to be polynomial in~$m$ (the degree of the polynomial may depend on~$r$) and that the sets~$B$ and~$S$ can be efficiently
computed. Polynomial bounds were first obtainend 
by Kreutzer et al.~\cite{siebertz2016polynomial}, we refer 
to the improved bounds of Pilipczuk et al.~\cite{pilipczuk2017wideness}.

\begin{lemma}[Pilipczuk et al.~\cite{pilipczuk2017wideness}]\label{thm:uqw}
Let $\CCC$ be a nowhere dense class of graphs. 
For all $r\in \N$ there is a polynomial  $N_r\colon \N\to \N$ 
and a constant $t_r\in \N$, such that the following holds.
Let $G\in \CCC$ and
let $A\subseteq V(G)$ be a vertex subset of size at least $N_r(m)$, for a given $m$.
Then there exists a set $S\subseteq V(G)$ of size $|S|\leq t_r$ and a set $B\subseteq A\setminus S$ 
of size $|B|\geq m$ which is $r$-independent in $G-S$.
Moreover, given~$G$ and $A$, such sets $S$ and $B$ can be computed in time $\Oof(|A|\cdot |E(G)|)$. 
\end{lemma}

We remark that the $\Oof$-notation in the above lemma 
hides constant factors depending on~$r$ (which is considered
fixed) and the class $\CCC$.

\medskip
\textbf{A-avoiding paths.}
Let $G$ be a graph and let $A\subseteq V(G)$ be a subset of vertices. For vertices $v\in A$ and $u\in V(G)$, a path $P$ connecting $u$ and 
$v$ is called {\em{$A$-avoiding}}
if all its vertices apart from $u$ and $v$ do not belong to $A$.

\medskip
\textbf{Projection profiles.}
Let $G$ be a graph, $A\subseteq V(G)$ and $r\in \N$. The {\em{$r$-projection}} of a vertex $u\in V(G)$ onto~$A$, denoted $M^G_r(u,A)$ 
is the set of all vertices $v\in A$ that
can be connected to $u$ by an $A$-avoiding path of length at most $r$. The {\em{$r$-projection profile}} of a vertex $u\in V(G)$ on $A$ is the function $\rho^G_r[u,A]$ mapping vertices of
$A$ to $\{0,1,\ldots,r,\infty\}$, defined as follows: for every $v\in A$, the value $\rho^G_r[u,A](v)$ is the length of a shortest $A$-avoiding path connecting $u$ and~$v$ and $\infty$ in case this length
is larger than~$r$. 
We define 
\[\projprof_r(G,A)=|\{\rho_r^G[u,A]\colon u\in V(G)\}|\]
to be the number of different $r$-projection profiles realized on $A$. 
For $u,v\in V(G)$ we define \[u\cong_{A,r} v \quad\Longleftrightarrow\quad \rho_r^G[u,A]=\rho_r^G[v,A].\]

\begin{lemma}[Eickmeyer et al.~\cite{eickmeyer2016neighborhood}]\label{lem:projection-complexity}
  Let $\CCC$ be a nowhere dense class of graphs. Then there is 
  a function $\fproj(r,\epsilon)$ such that for every $r\in \N$, 
  $\epsilon>0$, graph $G\in \CCC$, and vertex subset $A\subseteq V(G)$, 
  it holds that $\projprof_r(G,A)\leq \fproj(r,\epsilon)\cdot |A|^{1+\epsilon}$.
\end{lemma}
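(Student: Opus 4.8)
The plan is to prove that the number of distinct $r$-projection profiles realized on a set $A$ is almost linear in $|A|$, by reducing the bound on the full profile count to a bound on the simpler quantity counting only projections (the $0/1$ information of \emph{which} vertices are reachable), and then controlling the latter via the shallow-minor structure of nowhere dense classes. First I would observe that a projection profile $\rho^G_r[u,A]$ is a refinement of the projection $M^G_r(u,A)$: the profile records not just which vertices of $A$ are within $A$-avoiding distance $r$, but the exact distance $1,\ldots,r$ to each. The key reduction is that the number of profiles is controlled by the number of \emph{projections} taken over all distance thresholds up to $r$ — intuitively, knowing the projection $M^G_{r'}(u,A)$ for each $r'\le r$ determines the profile. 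So if I write $\mu_{r'}(G,A)$ for the number of distinct projections $M^G_{r'}(u,A)$, then $\projprof_r(G,A)$ is bounded by a product (or suitable combination) of the $\mu_{r'}(G,A)$ over $r'\in\{1,\ldots,r\}$, and it suffices to bound each projection count $\mu_{r'}(G,A)$ by $c\cdot|A|^{1+\epsilon}$.

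The heart of the argument is bounding the number of distinct $r$-projections $|\{M^G_r(u,A) : u\in V(G)\setminus A\}|$. Here I would use the standard tool that in a nowhere dense class, one cannot have too many vertices with large and pairwise-incomparable neighborhoods into a fixed set $A$ — this is exactly the kind of statement proved via the polynomial bounds on shallow topological minors / the weak coloring number machinery. Concretely, the plan is to contract each projection: a vertex $u$ together with a shortest $A$-avoiding path tree to the vertices it projects to forms a radius-$r$ connected subgraph, and if many vertices $u$ had projections of size $\ge 3$ that were pairwise distinct, one could extract a dense shallow minor (for instance a large $K_{s,t}$ subdivision at depth $r$, or a $\grad_{\text{something}}$-bound violation), contradicting nowhere denseness. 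The combinatorial core is a counting/Zarankovich-type argument: projections of size at most $2$ contribute $O(|A|^2)$ crudely but in fact $O(|A|)$ many \emph{distinct} small projections of bounded size, while projections of size $\ge 3$ are few in number because each such vertex witnesses a rich connection pattern into $A$ that a nowhere dense class forbids in large quantity. The $|A|^{1+\epsilon}$ bound, rather than $|A|\cdot\mathrm{polylog}$ or $|A|^2$, comes precisely from the polynomial quasi-wideness / shallow-minor density bounds, where the exponent $\epsilon$ absorbs the degree of the relevant polynomial as a function of $r$ and the choice of threshold.

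I expect the main obstacle to be the density argument bounding the number of \emph{large} projections. Bounding the small projections (those hitting few vertices of $A$) is essentially a counting exercise once one knows the vertices of each small projection are taken from $A$; the real difficulty is showing that only almost-linearly many vertices $u$ can have pairwise-distinct large projections. The clean way to handle this is to set up an auxiliary graph or bipartite incidence structure on $A$ and the projecting vertices, then argue that a superlinear (in $|A|$) number of distinct large projections forces a shallow clique-minor or a dense bipartite shallow minor whose size exceeds the nowhere-dense threshold $t(r')$ for the appropriate depth $r'$ depending on $r$. Getting the \emph{almost-linear} dependence with the free parameter $\epsilon$ — as opposed to merely polynomial — is where the sharp polynomial bounds from \Cref{thm:uqw} (or the analogous weak-coloring-number estimates) must be invoked, and matching the bookkeeping so that the depth parameters and the resulting exponents line up is the delicate part of the proof.
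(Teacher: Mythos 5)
First, note that the paper does not prove this statement at all: it is imported verbatim from Eickmeyer et al.~\cite{eickmeyer2016neighborhood}, so there is no in-paper proof to match your sketch against, and any assessment has to be against the argument in that cited source. Your sketch correctly identifies the general toolbox (density of shallow minors, polynomial quasi-wideness margins, separating small from large projections), but as a plan for proving the stated bound it has two concrete gaps.

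The first and most serious gap is the reduction from profiles to projections. It is true that the tuple $\bigl(M^G_1(u,A),\ldots,M^G_r(u,A)\bigr)$ determines $\rho^G_r[u,A]$, but the only bound this yields for the number of profiles is the \emph{product} $\prod_{r'=1}^{r}\projnum_{r'}(G,A)$, which even under the optimistic assumption $\projnum_{r'}(G,A)\leq c\cdot|A|^{1+\epsilon}$ gives $c^r\cdot|A|^{r(1+\epsilon)}$. That is polynomial in $|A|$ for fixed $r$ (and would still suffice for the polynomial kernel of Theorem~\ref{thm:reconfig}), but it is not the almost-linear bound $\fproj(r,\epsilon)\cdot|A|^{1+\epsilon}$ claimed in the lemma, and the almost-linear exponent is exactly what Theorem~\ref{thm:reconfig-ds2} needs. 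The ``suitable combination'' you defer is therefore the entire difficulty: the actual proof never counts projections threshold by threshold but carries the distance information through the whole argument, using an inductive halving of $r$ combined with the closure operation of Lemma~\ref{lem:closure}, which enlarges $A$ to $\cl_r(A)$ so that every projection onto the enlarged set has size only $|A|^{\epsilon}$; this closure step, which your sketch does not mention, is the central device that keeps the exponent at $1+\epsilon$ rather than letting it grow with $r$.

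The second gap is in the density argument itself. Your claim that the distinct projections of size at most $2$ number only $O(|A|)$ is not justified and is not true at this strength for nowhere dense classes: bounding the realized two-element projections amounts to bounding the edge count of a depth-$r$ minor on vertex set $A$, and nowhere denseness only gives $|A|^{1+\epsilon}$ there, not $O(|A|)$ (the linear bound is a bounded-expansion phenomenon). More importantly, for projections of size $\geq 3$ the extraction of a forbidden dense shallow minor from ``many pairwise distinct large projections'' does not go through as stated: the $A$-avoiding path systems witnessing different projections can overlap arbitrarily, so they cannot simultaneously be contracted into disjoint branch sets, and distinctness of the projection sets does not by itself produce high edge density in any single depth-$r$ minor. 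Making this step work is again what the iterative closure construction is for. So the sketch points at the right ingredients but omits the mechanism that actually delivers the $|A|^{1+\epsilon}$ bound, and the reduction it does make explicit provably loses the almost-linearity.
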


We remark that in \cite{eickmeyer2016neighborhood} 
$r$-projections onto $A$ are defined only for vertices which do 
not lie inside $A$ themselves. This does not affect the statement of 
\Cref{lem:projection-complexity}, as this change of definition
accounts only for a term $|A|$, which can be absorbed in the
function $\fproj$. 

\medskip
\textbf{Parameterized complexity.}
A problem is fixed-parameter tractable on a class~$\CCC$ of
graphs with respect to a parameter $k$, if there is an
algorithm deciding whether a graph $G\in \CCC$ admits a solution of
size $k$ in time $f(k)\cdot \abs{V(G)}^c$, for a computable function
$f$ and constant~$c$. A kernelization algorithm is a polynomial time 
algorithm which reduces the input instance to a 
sub-instance of size bounded in the
parameter only (independently of the input graph size). Every fixed-parameter
tractable problem admits a kernel, however, possibly of exponential or worse size. 
For efficient algorithms it is therefore most desirable to obtain polynomial, 
or optimally even linear, kernels. 
The $\mathsf{W}$-hierarchy is a collection of parameterized complexity classes
$\mathsf{FPT}\subseteq \mathsf{W}[1]\subseteq \mathsf{W}[2]\subseteq \ldots$. 
The assumption $\mathsf{FPT}\subsetneq \mathsf{W}[1]$ can be seen as the 
analogue of the conjecture that $\mathsf{P}\subsetneq \mathsf{NP}$. Therefore, 
showing hardness in the parameterized setting is usually accomplished by establishing
an $\mathit{fpt}$-reduction to a $\mathsf{W}[1]$-hard problem. We refer to 
the textbooks \cite{cygan2015parameterized,downey2013fundamentals,flum2006parameterized} 
for extensive background on parameterized complexity. 

\medskip
\textbf{Reconfiguration.}
The token addition and removal variant of the 
\textsc{Distance-$r$ Independent Set Reconfiguration} problem ($r$-ISR) is defined as follows. On input $(G,k,I_s,I_t)$, where $G$ is a graph, 
$k\in \N$, and $I_s,I_t$ are distance-$r$ independent sets
in $G$ of size $k$, the problem is to determine whether there exists a sequence $I_s=I_1,\ldots, I_\ell=I_t$ such that for all $1\leq j\leq \ell$
\begin{enumerate}
\vspace{-1mm}
\item $I_j$ is a distance-$r$ independent set in $G$, \vspace{-1mm}
\item $|\Delta(I_j,I_{j+1})|=|(I_j\setminus I_{j+1})\cup (I_{j+1}\setminus I_j)|=1$, and\vspace{-1mm}
\item $k-1\leq |I_j|\leq k$.
\end{enumerate}

%

The \textsc{Distance-$r$ Dominating Set Reconfiguration} problem ($r$-DSR) is defined analogously, we only demand that in the fourth item we 
have $k\leq |D_j|\leq k+1$ for the appearing distance-$r$ dominating
sets $D_j$, $1\leq j\leq \ell$. 
We obtain positive results also for the variants where for $r$-ISR we get as input two integer parameters
$k,k'$ and we replace the 
fourth condition by $k\leq |I_j|\leq k'$ for all $1\leq j\leq \ell$. For $r$-DSR we may
remove the condition on a lower bound completely, that is, in the forth condition 
demand only that $|D_j|\leq k+1$ for all $i\leq j\leq \ell$.

\section{Distance-r independent set reconfiguration}

\subsection{Polynomial kernel}

Our approach for kernelization of $r$-ISR is similar to that of
Lokshtanov et al.~\cite{lokshtanov2015reconfiguration}. 
We iteratively remove \emph{irrelevant vertices} from 
the input instance, until this is no longer possible, in which case
the resulting instance will be small. 

\medskip
\textbf{Irrelevant vertices.}
Let $(G,I_s,I_t,k)$ be an instance of \textsc{Distance-$r$ Independent
Set Reconfiguration}. A vertex $v\in V(G)\setminus 
(I_s\cup I_t)$ is called
an \emph{irrelevant vertex} if $(G, k, I_s, I_t)$ is a positive 
instance if and only if $(G-v, k,I_s, I_t)$
is a positive instance.

\medskip
The following lemma shows that given $G$ is large, we can efficiently
find an irrelevant vertex.

\begin{lemma}\label{lem:irrelevantvertex}
Let $(G,k,I_s,I_t)$ for $G\in\CCC$ be an instance of \textsc{Distance-$r$ Independent Set Reconfiguration}, $M\coloneqq I_s\cup I_t$  and let 
$R\coloneqq V(G)\setminus M$. Let $S\subseteq V(G)$ and 
$B\subseteq R\setminus S$ such that $B$ is $2r$-independent in $G-S$. 
Furthermore, assume that all vertices of $B$ have the same $r$-projection
profile to $S$, i.e., $\rho_{r}^G[u,S]=\rho_r^G[v,S]$. 
If $|B|\geq 2k$, then any $v\in B$ is an irrelevant vertex. 
\end{lemma}
\begin{proof}
Let $v\in B$ and enumerate $2k-1$ vertices of $B\setminus\{v\}$
as $w_1,\ldots, w_{2k-1}$. We aim to show that~$v$ is an 
irrelevant vertex. Observe that since $B\subseteq R\setminus S$
the set $\{v,w_1,\ldots, w_{2k-1}\}$ is disjoint from the set $M\cup S$. 

Consider a reconfiguration sequence $I_s=I_1,I_2,\ldots, I_t=I_\ell$ from $I_s$ to $I_t$ in $G$ with a minimum number of occurrences 
of $v$. We want to prove that $v$ does not occur at all in the 
sequence, as this proves that $v$ is irrelevant. Towards a 
contradiction assume that $v$ does occur in the sequence
and let $p$, $1<p<\ell$, be the first index at which $v$ appears in $I_p$ (that is, $v\in I_p$ and $v\not\in I_i$ for all 
$i<p$).
Let $q+1, p<q+1\leq \ell$ be the first index after~$p$ at 
which~$v$ is 
removed (that is, $v\in I_p,\ldots, I_q$ and $v\not\in I_{q+1}$). 
We will modify the sub-sequence $I_p,\ldots, I_q$ such that it does not use $v$, contradicting our choice of a reconfiguration 
sequence with a minimum number of occurrences of $v$. 
Fix some $j$, $p\leq j\leq q$, and let $I=I_j\setminus\{v\}$ and 
$I'=I_{j+2}\setminus \{v\}$. 

\setcounter{theorem}{0}
\begin{claim}\label{cl:cl1}
If there is
$z\in I$ with \mbox{$\dist_{G}(w_i,z)\leq r$} for 
some $w_i$, then $\dist_G(w_\ell,z)>r$ for all $\ell\neq i$. 
\end{claim}
\begin{proof}\renewcommand{\qedsymbol}{$\lrcorner$}
Assume towards a contradiction that there is $\ell\neq i$ 
with $\dist_G(w_\ell,z)\leq r$. Let $P_i$ be a shortest 
path (of length at most $r$) between $w_i$ and $z$ 
and let $P_\ell$ a shortest path (of length at most $r$) 
between $w_\ell$ and $z$. 
As $B$ is $2r$-independent in $G-S$ there exists a vertex
$s\in S$ with $s\in V(P_i)$ or $s\in V(P_\ell)$. By symmetry
we may assume that $s\in V(P_i)$ and assume that among
all vertices of $S$ which lie on $P_i$, the vertex $s$ is the 
one which is closest to $w_i$. 
Then we have $\dist_G(w_i,z)=\dist_G(w_i,s)+\dist_G(s,z)$. 
Now we have \mbox{$\rho_r^G[v,S]=
\rho_r^G[w_i,S]$}, hence $\dist_G(v,s)=\dist_G(w_i,s)$. 
This implies 
\[\dist_G(v,z)\leq \dist_G(v,s)+\dist_G(s,z)= \dist_G(w_i,s)+\dist_G(s,z)=\dist_G(w_i,z)\leq r,\]
contradicting that $I_j$ is a distance-$r$ independent set. 
\end{proof}

\begin{claim}\label{cl:cl3}
There exists $w\in\{w_1,\ldots, w_{2k-1}\}$ with $(I\cup I')
\cap N_r(w)=\emptyset$. 
\end{claim}
\begin{proof}\renewcommand{\qedsymbol}{$\lrcorner$}
For $z\in I$, if there is $w_i\in \{w_1,\ldots, w_{2k-1}\}$
with $z\in N_r(w_i)$, i.e., \mbox{$\dist_G(z,w_i)\leq r$}, then 
by \Cref{cl:cl1} we have $\dist_G(w_\ell,z)>r$ for all $\ell\neq i$. 
As the set~$I$ contains only $k-1$ elements we 
conclude that there are $k$ elements $u_1,\ldots, u_k$ in 
$\{w_1,\ldots, w_{2k-1}\}$ with $I\cap N_r(u_i)=\emptyset$
for all $1\leq i\leq k$. We apply the same reasoning to the
set $I'$ and $\{u_1,\ldots, u_k\}$
(note that the claims are also applicable to $I'$, 
as $j$ is chosen arbitrary). This leaves us with an element
$w\in\{w_1,\ldots, w_{2k-1}\}$ with $(I\cup I')
\cap N_r(w)=\emptyset$. 
\end{proof}

As $j$ was chosen arbitrary, we conclude that for every $j$
there exists an element $w^j\in \{w_1,\ldots, w_{2k-1}\}$ such that 
$(I_j\setminus\{v\})\cup\{w^j\}$ and 
$(I_{j+2}\setminus\{v\})\cup\{w^j\}$ are distance-$r$ independent 
sets of the same size as $I_j$. Note that we have $|I_p|=|I_q|=k$, 
as $v$ was introduced at $I_p$ and removed at $I_{q+1}$.
Hence, if we have $j=p+2x$ for some $x\in \N$, then 
$I_{j+1}\subseteq I_j,I_{j+2}$, hence also $(I_{j+1}\setminus\{v\})
\cup\{w^j\}$ is a distance-$r$ independent sets of size $k-1$. 

\smallskip
We now modify the sequence $I_p,\ldots, I_q$ as follows. 
For each $j=p+2x$ for some $x\in\N$ such that $p\leq j< q$, 
we replace the subsequence $I_j\rightarrow I_{j+1}$ of the reconfiguration
sequence by the sequence 
\[(I_j\setminus\{v\})\cup\{w^j\} \rightarrow (I_{j+1}\setminus \{v\})\cup \{w^j\}\rightarrow (I_{j+2}\setminus \{v\})\cup \{w^j\}\rightarrow (I_{j+2}\setminus \{v\})\]
and we replace $I_q$ by $(I_q\setminus\{v\})\cup \{w^q\}$. 

\smallskip
By our above argument, each of the intermediate 
configurations is a distance-$r$ independent set of size $k$ or $k-1$. Furthermore,
the transition $I_{p-1}, J_p$ is valid, so is every intermediate
transition and the transition $J_q, I_{q+1}$. This finishes the
proof of the lemma. 
\end{proof}

\setcounter{theorem}{3}
\begin{theorem}\label{thm:reconfig}
Let $\CCC$ be a nowhere dense class of graphs and let $r\in \N$. 
Let $(G,k,I_s,I_t)$ be an instance of \textsc{Distance-$r$ Independent
Set Reconfiguration}, where $G\in \CCC$. Then we can 
compute in polynomial time a subgraph $G'\subseteq G$ with 
$I_s,I_k\subseteq V(G')$ such that $(G,k,I_s,I_t)$ is a positive
instance if and only if $(G',k,I_s,I_t)$ is a positive instance
and $G'$ has order polynomial in $k$. 
\end{theorem}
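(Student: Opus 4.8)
The plan is to kernelize by repeatedly discarding irrelevant vertices and then to bound the size of the resulting graph by combining the two structural lemmas already at our disposal. Write $A \coloneqq I_s \cup I_t$, so that $\abs{A} \leq 2k$, and recall the equivalence relation from Lemma~\ref{lem:irrelevantvertex}: two vertices $u,v \in R = V(G)\setminus A$ are equivalent exactly when they realize the same $r$-projection profile $\rho_r^G[\cdot, A]$ on $A$. The key observation is that Lemma~\ref{lem:irrelevantvertex} lets us shrink any equivalence class that is too large, while Lemma~\ref{lem:projection-complexity} caps the \emph{number} of such classes; the product of these two bounds will be the order of the kernel.

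Concretely, I would run the following loop. As long as the current graph (a subgraph of $G$ still containing $A$) has an equivalence class $\kappa$ with $\abs{\kappa} > N_{2r}((r+2)^t(k+2))$, Lemma~\ref{lem:irrelevantvertex} guarantees that $\kappa$ contains an irrelevant vertex $v$, and its proof --- which invokes the quasi-wideness computation of Lemma~\ref{thm:uqw} applied to $\kappa$ in time $\Oof(\abs{\kappa}\cdot \abs{E(G)})$ and then buckets the vertices of $\kappa$ by their profile onto the small set $S$ --- actually exhibits such a $v$. I delete $v$. Since $v$ is irrelevant, the instance stays equivalent; since $v \notin A$, both $I_s$ and $I_t$ survive. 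Each iteration strictly decreases the vertex count and is carried out in polynomial time (computing all profiles $\rho_r^G[u,A]$ reduces to shortest-$A$-avoiding-path computations, i.e.\ breadth-first searches), so the loop halts after at most $\abs{V(G)}$ rounds and the whole reduction runs in polynomial time. Let $G'$ denote the graph produced when the loop stops.

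It remains to bound $\abs{V(G')}$. When the loop terminates, every equivalence class has at most $N_{2r}((r+2)^t(k+2))$ vertices. The number of classes is precisely the number of distinct realized profiles, $\projprof_r(G',A)$, which by Lemma~\ref{lem:projection-complexity} (fixing, say, $\epsilon = 1$) is at most $\fproj(r,1)\cdot \abs{A}^{2} \leq \fproj(r,1)\cdot (2k)^{2}$. Multiplying the two bounds shows that $\abs{V(G')\setminus A}$ is at most $\fproj(r,1)\cdot (2k)^{2}\cdot N_{2r}((r+2)^t(k+2))$, and adding $\abs{A} \leq 2k$ gives a bound on $\abs{V(G')}$ that is polynomial in $k$, with degree depending only on $r$ and $\CCC$. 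This is the claimed polynomial kernel.

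The main obstacle I anticipate is the interaction between vertex deletion and the profiles: removing $v$ destroys all $A$-avoiding paths through $v$, so the profiles $\rho_r^{G'}[u,A]$ --- and hence the equivalence classes --- must be recomputed after every deletion rather than computed once at the outset. Correctness is nonetheless safe because irrelevance in Lemma~\ref{lem:irrelevantvertex} is phrased with respect to the current graph, so each deletion is individually answer-preserving and the chain of equivalences composes to equivalence of the original and final instances. A second, milder point is that Lemma~\ref{lem:projection-complexity} must be applied to $G'$ rather than to the original $G$; this is legitimate since $G' \subseteq G$ and the bounded-depth-minor bounds defining nowhere denseness are inherited by subgraphs, so the profile-complexity estimate holds verbatim for $G'$.
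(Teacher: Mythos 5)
Your proposal is correct and follows essentially the same route as the paper: iteratively delete irrelevant vertices via Lemma~\ref{lem:irrelevantvertex} until every projection class onto $I_s\cup I_t$ has size at most $N_{2r}((r+2)^t(k+2))$, then bound the number of classes by Lemma~\ref{lem:projection-complexity} and multiply. Your added remarks on recomputing profiles after each deletion and on the bound surviving passage to subgraphs are sound elaborations of details the paper leaves implicit.
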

\begin{proof}
Let $N=N_{2r}:\N\rightarrow \N$ be the function 
and $t=t_{2r}\in \N$ be the constant 
describing~$\CCC$ as uniformly quasi-wide (for 
parameter $2r$) as defined in \Cref{thm:uqw}.
Let $M\coloneqq I_s\mathop{\cup} I_t$ of size $2k$
and let $R\coloneqq V(G)\setminus M$. 

If $|R|\geq N(2k\cdot 
(r+2)^t)$, according to \Cref{thm:uqw} we can compute
in polynomial time a set $S\subseteq V(G)$ of size $|S|\leq t$
and a set $B\subseteq R\setminus S$ of size $|B|\geq 
2k(r+2)^t$ which is $2r$-independent in $G-S$. 
We classify the elements of $B$ with respect to their 
$r$-projections to the set $S$. The corresponding 
equivalence relation $\cong_{S,r}$ on $B$ 
has at most $(r+2)^t$ equivalence classes, 
as $|S|\leq t$ and $\rho_r^G[u,S]$ is a mapping from $S$ to 
$\{0,1,\ldots, r,\infty\}$. 
Since $|B|\geq 2k(r+2)^t$, 
we know that at least one equivalence class 
contains at least $2k$ vertices of $B$. 
We apply \Cref{lem:irrelevantvertex} to this
equivalence class to find an irrelevant vertex $v$. 
We remove $v$ from the graph and iterate this procedure
until $|R|<N(2k\cdot (r+2)^t)$. 
In this case the resulting graph has size at most 
$N(2k\cdot (r+2)^t)+2k$, 
which is polynomial in $k$ for each fixed value of~$r$. 
\end{proof}

It is easy to see that we can carry out the same proof for the 
reconfiguration variant where 
we get as input two integer parameters $k,k'$ and we replace the 
fourth condition by $k\leq |I_j|\leq k'$ for all $1\leq j\leq \ell$. The kernel
will have size polynomial in $k'$. 

\smallskip
We remark that the kernel does possibly not preserve the 
length of a shortest reconfiguration sequence. It remains an 
interesting open question to compute a kernel with this preservation
property. 

\subsection{Lower bounds}

Recall that for a graph $G$ and 
$s\in\N$, $G_s$ denotes the $s$-subdivision of $G$. 
Our hardness result is based on the following observation
by Ne\v{s}et\v{r}il and Ossona de Mendez.

\begin{lemma}[Ne\v{s}et\v{r}il and Ossona de Mendez~\cite{nevsetvril2011nowhere}, see also
\cite{drange2016kernelization}]\label{lem:subdiv}
Let $\CCC$ be somewhere dense and closed under taking
subgraphs. Then there is $s\in \N$ such that for all graphs $G$
we have $G_s\in\CCC$. 
\end{lemma}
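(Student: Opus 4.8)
The plan is to reduce the statement to producing exact subdivisions of arbitrarily large cliques inside members of $\CCC$, and to obtain these in three stages: a reformulation of somewhere-denseness, a conversion of bounded-depth minors into bounded-length subdivisions, and a Ramsey normalisation of the path lengths. First I would record the reformulation that a class $\CCC$ is somewhere dense if and only if there is a single depth $d\in\N$ such that for every $N$ some $G\in\CCC$ satisfies $K_N\minor_d G$. Indeed, if such a $d$ exists then no $t$ can witness nowhere-denseness (take $N=t(d)$); conversely, if for every $d$ the order of depth-$d$ clique minors in $\CCC$ were bounded by some $N_d$, then $t(d):=N_d$ would witness nowhere-denseness. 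Fix such a $d$. Since $\CCC$ is closed under taking subgraphs and since for any graph $H$ on $q$ vertices the $r$-subdivision $H_r$ is a subgraph of $(K_q)_r$, it suffices to exhibit a single $r$ with $(K_m)_r\in\CCC$ for arbitrarily large $m$; subgraph-closure then upgrades this to all $m$ and, via $H_r\subseteq (K_{|V(H)|})_r$, to every graph $H$.

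The heart of the argument is to turn a depth-$d$ clique minor of huge order into a bounded-length subdivision of a large clique sitting inside $G$ as a subgraph. Given branch sets $B_1,\dots,B_N$ realising $K_N\minor_d G$, I would pick in each $B_i$ a centre $c_i$ together with a breadth-first tree of depth at most $d$, and for each pair $\{i,j\}$ route a connecting walk from $c_i$ to some $a_{ij}\in B_i$, across the edge $a_{ij}b_{ij}$, and then from $b_{ij}$ to $c_j\in B_j$, of total length at most $2d+1$. Taking the $c_i$ as principal vertices almost exhibits a bounded-depth topological clique minor, with the essential caveat that the connecting paths issuing from a common branch set $B_i$ need not be internally disjoint. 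Making them internally disjoint is the \textbf{main obstacle}: a ``thin'' branch set offers only few internally disjoint root-paths and so cannot supply a private path to every neighbour, and resolving this is exactly the content of the known conversion between shallow minors and shallow topological minors. I would invoke that conversion (Ne\v{s}et\v{r}il and Ossona de Mendez~\cite{nevsetvril2011nowhere}, also used in~\cite{drange2016kernelization}) to conclude that there is a fixed bound $p$ such that, for every $m$, some $G\in\CCC$ contains a subdivision of $K_m$ in which all branch paths are internally vertex disjoint and have length at most $p$.

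It remains to normalise the path lengths to a single exact value, which is a clean Ramsey argument. I colour each edge of the principal $K_m$ by the length of its subdivision path, a colour from the finite palette $\{1,\dots,p\}$; by Ramsey's theorem a sufficiently large $m$ yields a monochromatic principal subclique of any prescribed order $m'$, all of whose connecting paths have a common length $\ell\in\{1,\dots,p\}$. Restricting the subdivision to this subclique exhibits $(K_{m'})_\ell$ as a subgraph of $G$, so $(K_{m'})_\ell\in\CCC$ by subgraph-closure. Letting $m\to\infty$ forces $m'\to\infty$, while $\ell$ ranges over the finite set $\{1,\dots,p\}$; by the pigeonhole principle a single value $r:=\ell$ recurs for arbitrarily large $m'$. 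Hence $(K_{m'})_r\in\CCC$ for unbounded $m'$, and by subgraph-closure $(K_m)_r\in\CCC$ for all $m$ and $H_r\in\CCC$ for every graph $H$, as required. I expect the only genuinely delicate point to be the minor-to-topological-minor conversion of the second paragraph; the reformulation and the Ramsey normalisation are routine.
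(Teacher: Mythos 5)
The paper offers no proof of this lemma; it is imported verbatim from Ne\v{s}et\v{r}il--Ossona de Mendez and \cite{drange2016kernelization}, so the comparison can only be against the standard argument in those sources. Your proposal is correct and follows essentially that standard route: the reformulation of somewhere-denseness as a single depth $d$ admitting arbitrarily large depth-$d$ clique minors, the conversion of shallow minors into bounded-length topological minors, and the Ramsey-plus-pigeonhole normalisation to a single exact subdivision length, followed by subgraph-closure to pass from $(K_m)_r$ to arbitrary $H_r$. The one step you leave as a black box --- making the connecting paths issuing from a common branch set internally disjoint --- is precisely the shallow-minor-to-shallow-topological-minor theorem of the very authors to whom the lemma is attributed, so delegating it is appropriate; you have correctly identified it as the only non-routine ingredient.
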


Furthermore, we use that \textsc{Independent Set Reconfiguration}, 
i.e., the case $r=1$ is hard. 

\begin{lemma}[Ito et al.~\cite{ito2014parameterized}]\label{lem:is-whard}
\textsc{Independent Set Reconfiguration} is $\mathsf{W}[1]$-hard. 
\end{lemma}

\begin{theorem}\label{thm:isw1}
Let $\CCC$ be somewhere dense and closed under taking
subgraphs. Then there is $r\in \N$ such that 
\textsc{Distance-$r$ Independent Set 
Reconfiguration} is $\mathsf{W}[1]$-hard on~$\CCC$. 
\end{theorem}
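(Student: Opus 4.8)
The plan is to give an \emph{fpt}-reduction from the token jumping variant of \textsc{Distance-$1$ Independent Set Reconfiguration}, which is $\mathsf{W}[1]$-hard by Lemma~\ref{lem:is-whard}, to \textsc{Distance-$r$ Independent Set Reconfiguration} on $\CCC$. First I would fix the value $r\in\N$ supplied by Lemma~\ref{lem:subdiv}, so that $H_r\in\CCC$ for every graph $H$; since $\CCC$ is closed under subgraphs, this lets me freely build gadget graphs, subdivide them $r$-fold, and remain inside $\CCC$. Given an instance $(G,k,I_s,I_t)$ of distance-$1$ reconfiguration, the basic idea is to pass to (an augmentation of) the $r$-subdivision $G_r$, keeping the same budget $k$ and keeping $I_s,I_t$ on the original vertices, which I will call \emph{branch vertices}.

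The reduction rests on a distance correspondence. In $G_r$ two branch vertices $u,v$ lie at distance exactly $r$ if $uv\in E(G)$ and at distance at least $2r$ otherwise, because every path between branch vertices decomposes into subdivided edges. Hence a set $X$ of branch vertices is independent in $G$ if and only if it is $r$-independent in $G_r$. In particular $I_s,I_t$ remain valid distance-$r$ independent sets, and the forward (completeness) direction is immediate: a reconfiguration sequence in $G$ lifts verbatim to $G_r$ by placing each token on the corresponding branch vertex. Every set in the lifted sequence consists of branch vertices, hence is $r$-independent by the correspondence; the cardinalities and the one-token changes are preserved, so the lift is a valid distance-$r$ reconfiguration sequence.

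The real work is the converse (soundness), and this is where I expect the main obstacle. A token placed on an \emph{internal} subdivision vertex $x$ of the $u$--$v$ path sees, among the branch vertices, only $u$ and $v$ within distance $r$; thus $x$ blocks strictly fewer branch vertices than either endpoint does, and such tokens can create reconfiguration freedom that is simply absent in $G$. Concretely, a bare $r$-subdivision is \emph{not} sound: already the $4$-cycle with $I_s=\{1,3\}$ and $I_t=\{2,4\}$ is a no-instance for $r=1$, whereas in the $r$-subdivided cycle the two tokens can be rotated all the way around through subdivision vertices, turning it into a yes-instance. Hence I cannot use a black-box subdivision and must prevent tokens from profitably resting on subdivision vertices.

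To overcome this I would not treat Lemma~\ref{lem:is-whard} as a black box but rather take the underlying $\mathsf{W}[1]$-hard construction for distance-$1$ reconfiguration and rebuild it at scale $r$, replacing every edge by a length-$r$ path (equivalently, subdividing) and re-verifying the gadget analysis with all distances multiplied by $r$; the rigidity of the selection gadgets---rather than the ambient graph being a bare cycle---is what should forbid the rotation-type cheating above, and the resulting graph lies in $\CCC$ by Lemma~\ref{lem:subdiv}. The key statement to establish is a \emph{normal form}: every reconfiguration sequence in the subdivided instance can be transformed into one in which all tokens sit on branch vertices, with the same endpoints. Granting this normal form, projecting each configuration back to $G$ (undoing the lift) yields a valid distance-$1$ reconfiguration sequence, which gives soundness. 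Since the transformation is computable in polynomial time and leaves the parameter $k$ unchanged, it is an \emph{fpt}-reduction, proving that \textsc{Distance-$r$ Independent Set Reconfiguration} is $\mathsf{W}[1]$-hard on $\CCC$. Establishing this normal form---confirming that subdivision vertices are useless in the presence of the scaled gadgets---is the crux of the argument.
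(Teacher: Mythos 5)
You correctly diagnose the central obstacle---a bare $r$-subdivision is unsound because tokens can profitably rest on subdivision vertices, as your $4$-cycle example shows---but your proposed repair leaves exactly that obstacle unresolved. Opening up the $\mathsf{W}[1]$-hardness construction behind Lemma~\ref{lem:is-whard}, rescaling all distances by $r$, and proving a ``normal form'' stating that every reconfiguration sequence can be pushed onto branch vertices is not carried out anywhere in your argument; you explicitly defer it as ``the crux.'' This is not a routine verification: whether the selection gadgets of Ito et al.\ retain enough rigidity after subdivision to forbid rotation-type cheating is precisely the question, and nothing in your write-up answers it. As it stands, the soundness direction rests on an unproven claim, so the proof is incomplete.

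The paper avoids this difficulty entirely and keeps the reduction black-box. Instead of subdividing $G$ itself, it subdivides an auxiliary graph $J$ built from the incidence structure of $G$ (a vertex for each vertex and each edge of $G$) together with an apex vertex $w$ adjacent to every edge-vertex, and it targets distance $4s-1$ where $s$ is the subdivision length supplied by Lemma~\ref{lem:subdiv}. In $J_s$, adjacent vertices of $G$ are at distance $2s$ and non-adjacent ones at distance $4s$ (through $w$), so independence in $G$ corresponds exactly to $(4s-1)$-independence among the original vertices. Crucially, every vertex of $J_s$ that is not an original vertex of $G$ lies within distance $4s-1$ of \emph{all} other vertices (again via $w$), so no $(4s-1)$-independent set of size at least two can contain such a vertex: subdivision and auxiliary vertices are useless by construction, not by gadget analysis, and Lemma~\ref{lem:is-whard} can then be invoked as a black box. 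To salvage your approach you would need either to actually supply the normal-form argument for the specific gadgets underlying Lemma~\ref{lem:is-whard}, or to add a structural mechanism such as the apex $w$ that forbids tokens from occupying non-branch vertices outright.
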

\begin{proof}
According to \Cref{lem:subdiv}, there is $s\in \N$ such that
for all graphs $G$ we have $G_s\in \CCC$. We reduce 
$1$-ISR to $(4s-1)$-ISR on $\CCC$ by establishing the following. 
For each graph $G$ there exists a polynomial time computable graph 
$H\in\CCC$ such that $V(G)\subseteq V(H)$ and such that
\begin{enumerate}
\item every independent set $I$ in $G$ is a $(4s-1)$-independent
set in $H$ and 
\item every $(4s-1)$-independent set $I$ of size at least $2$ in 
$H$ consists only of vertices which are also vertices of $G$ and
$I$ is an independent set in $G$. 
\end{enumerate}

Note that we may assume that the parameter $k$ is always at least
$2$. The above properties guarantee that every reconfiguration 
sequence $I_1,\ldots, I_\ell$ of independent sets in~$G$
corresponds uniquely to a reconfiguration sequence of distance-$(4s-1)$
independent sets in $H$ and vice versa. Hence, we can conclude 
the statement of the theorem for $r=4s-1$ by
applying \Cref{lem:is-whard}. Note that the reduction also establishes
$\mathsf{W}[1]$-hardness of \textsc{Distance-$r$ Independent Set}
on somewhere dense graph classes which are closed under taking 
subgraphs.

Let $G$ be a graph with vertex set $\{v_1,\ldots,v_n\}$ and
edge set $\{e_1,\ldots,e_m\}$. We remark that the hardness result
 of \Cref{lem:is-whard} works also if we assume that all input graphs do not
 have isolated vertices, so we may assume that $G$ does not contain isolated vertices. We define the new graph
$J$ with vertex set \[\{v_1,\ldots,v_n,e_1,\ldots, e_m, w\}\] 
and edge set \[\{ve:v\in V(G), e\in E(G), v\in e\}\cup \{ew :e\in E(G)\}.\] 
We claim that $H=J_s$ satisfies the above claimed properties. 

\bigskip
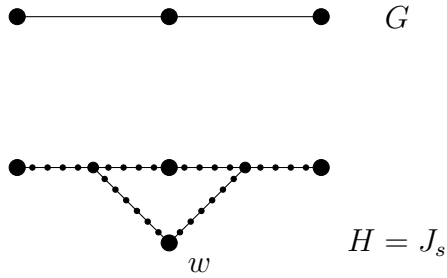
\begin{figure}[h]
\begin{center}
\begin{tikzpicture}[decoration={markings,
  mark=between positions 0 and 1 step 0.2cm
  with { \draw [fill] (0,0) circle [radius=1pt];}}]
\filldraw[fill=black] (0,0) circle (3pt);
\filldraw[fill=black] (2,0) circle (3pt);
\filldraw[fill=black] (4,0) circle (3pt);
\node at (5,0) {$G$};
\draw[-] (0,0) -- (4,0);

\filldraw[fill=black] (0,-2) circle (3pt);
\filldraw[fill=black] (2,-2) circle (3pt);
\filldraw[fill=black] (4,-2) circle (3pt);
\path[postaction={decorate}]  (0,-2) to (4,-2);
\draw[-] (0,-2) to (4,-2);

\filldraw[fill=black] (1,-2) circle (2pt);
\filldraw[fill=black] (3,-2) circle (2pt);

\filldraw[fill=black] (2,-3) circle (3pt);
\path[postaction={decorate}]  (2,-3) to (1,-2);
\path[postaction={decorate}]  (2,-3) to (3,-2);
\draw[-] (2,-3) to (3,-2);
\draw[-] (2,-3) to (1,-2);
\node at (2.4,-3.3) {$w$};
\node at (5,-3) {$H=J_s$};

\end{tikzpicture}
\end{center}
\caption{A graph $G$ and the constructed graph $H\in \CCC$. Vertices
at distance $1$ in $G$ have distance $2s$ in $H$, while vertices at 
distance $2$ have distance $4s$. All vertices of $H$ which do not
correspond to vertices of $G$ have distance at most $4s-1$.}
\end{figure}

Let $I$ be a distance-$1$ independent set in $G$. By construction
of $J$, if $u,v\in V(G)$ are adjacent in $G$, then they have 
distance $2s$ in $J_s$, otherwise they have distance $4s$ in 
$J_s$ (via the vertex~$w$). Hence, $I$ is a distance-$(4s-1)$
independent set in $J_s$. 

Conversely, let $I$ be a distance-$(4s-1)$ independent set
in $J_s$ of size at least $2$. First observe that $I$ consists
only of vertices which are also vertices of $G$. All other
vertices have mutual distance at most $4s-1$ via the vertex $w$. 
As seen above, the elements of $I$ have distance $4s$ in~$J_s$
and therefore distance at least $2$ in $G$, that is, $I$ is
an independent set in~$G$. This finishes the proof. 
\end{proof}

\section{Distance-r dominating set reconfiguration}

\subsection{Polynomial kernel}

The kernelization for \textsc{Distance-$r$ Dominating Set Reconfiguration} strongly depends on the following
notion of a domination core which was (in different variants)
also used in the earlier kernelization results for distance-$r$
dominating sets~\cite{DawarK09,drange2016kernelization,eickmeyer2016neighborhood}.

\medskip
\textbf{Domination core.}
Let $G$ be a graph and $k,r\in\N$. A set $Z\subseteq V(G)$ is called a \emph{$(k,r)$-domination core} if every set $D$ of size at most $k$ that $r$-dominates $Z$ also $r$-dominates~$G$. 

\smallskip
Clearly, $V(G)$ is a $(k,r)$-domination core. Hence, 
starting with $Z=V(G)$, using the next lemma, we can 
gradually remove vertices from $Z$ while maintaining the 
invariant that $Z$ is a $(k,r)$-domination core. 
The proof of the lemma is the same as the proof of
Lemma 11 in \cite{DawarK09} and Lemma 4.1 in \cite{siebertz2016polynomial}, we just use the better bounds from \Cref{thm:uqw}.

\begin{lemma}\label{lem:findredundantvertex}
Let $\CCC$ be a nowhere dense class of graphs and 
let $k,r\in\N$. Let $N=N_{2r}$ and $t=t_{2r}$ 
be the functions characterizing $\CCC$ as uniformly quasi-wide according to \Cref{thm:uqw} with parameter $2r$. 
  There is an algorithm
  that, given a graph $G\in\CCC$, $k\in\N$ and $Z\subseteq V(G)$ with 
  $\abs{Z}>N\bigl((k+2)(2r+1)^t\bigr)\eqqcolon \ell$ runs in time $\Oof(t\cdot \ell\cdot |E(G)|)$, and  
  returns a vertex $w\in Z$ such that for 
  any set $X\subseteq V(G)$
  with $\abs{X}\leq k$,
  it holds that $X$ is an $r$-dominating set of $Z$ if, and only if, 
  $X$ is an $r$-dominating set of
    $Z\setminus \{w\}$.
\end{lemma}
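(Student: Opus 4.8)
The plan is to combine quasi-wideness with a projection-profile classification, following the proofs of Lemma~11 in~\cite{DawarK09} and Lemma~4.1 in~\cite{siebertz2016polynomial}, but invoking the polynomial margins of \ifthenelse{\boolean{IsArxiv}}{\Cref{thm:uqw}}{Lemma~\ref{thm:uqw}}. First I would apply \ifthenelse{\boolean{IsArxiv}}{\Cref{thm:uqw}}{Lemma~\ref{thm:uqw}} with radius $2r$, parameter $m\coloneqq(k+2)(2r+1)^t$, and a subset $A\subseteq Z$ of size $\ell+1$ (which exists since $\abs{Z}>\ell=N_{2r}(m)$). This produces, in time $\Oof(\abs{A}\cdot\abs{E(G)})=\Oof(\ell\cdot\abs{E(G)})$, a set $S\subseteq V(G)$ with $\abs{S}\leq t$ and a set $B\subseteq Z\setminus S$ with $\abs{B}\geq m$ that is $2r$-independent in $G-S$. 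I would then classify the vertices of $B$ by their $r$-projection profile onto $S$: each profile is a map from $S$ into $\{0,1,\ldots,r,\infty\}$, a set of at most $2r+1$ values (since $r+2\leq 2r+1$ for $r\geq 1$), so there are at most $(2r+1)^t$ profile classes. Computing all these profiles amounts to one breadth-first search from each of the $\leq t$ vertices of $S$, costing $\Oof(t\cdot\ell\cdot\abs{E(G)})$ in total and matching the claimed running time. By pigeonhole some class $\kappa\subseteq B$ has $\abs{\kappa}\geq k+2$, and the algorithm simply returns an arbitrary $w\in\kappa$.

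It then remains to verify that such a $w$ is redundant. One implication is immediate: if $X$ $r$-dominates $Z$ it also $r$-dominates $Z\setminus\{w\}$. For the converse, fix any $X\subseteq V(G)$ with $\abs{X}\leq k$ that $r$-dominates $Z\setminus\{w\}$; in particular $X$ $r$-dominates the at least $k+1$ vertices of $\kappa\setminus\{w\}$. For each such $b$ I would pick a witness $x\in X$ with $\dist_G(x,b)\leq r$ together with a shortest $x$--$b$ path, and split according to whether this path meets $S$. If it avoids $S$ entirely, then $x\notin S$ and $\dist_{G-S}(x,b)\leq r$; since $B$ is $2r$-independent in $G-S$, no $x\in X\setminus S$ can serve two distinct vertices of $B$ this way, as two such would give $\dist_{G-S}(b_1,b_2)\leq 2r$. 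The decisive second case, which I expect to be the main obstacle, is when the shortest path meets $S$: letting $s$ be its last vertex in $S$, the tail from $s$ to $b$ is $S$-avoiding of some length $\tau$ with $\dist_G(x,s)+\tau\leq r$, so $\rho_r^G[b,S](s)\leq\tau$. Because every $b'\in\kappa$ shares the profile of $b$, we get $\rho_r^G[b',S](s)=\rho_r^G[b,S](s)\leq\tau$, and concatenating a shortest $x$--$s$ path with a shortest $S$-avoiding $s$--$b'$ path yields $\dist_G(x,b')\leq\dist_G(x,s)+\tau\leq r$. Hence a single \emph{through-$S$} domination of one vertex of $\kappa$ forces $X$ to $r$-dominate all of $\kappa$, in particular $w$.

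Putting the two cases together completes the argument. If no vertex of $\kappa\setminus\{w\}$ were dominated through $S$, then all of its $\geq k+1$ vertices would be dominated along $S$-avoiding paths by pairwise distinct vertices of $X\setminus S$, forcing $\abs{X}\geq k+1$, a contradiction. Therefore at least one vertex of $\kappa\setminus\{w\}$ is dominated through $S$, and the profile-transfer step shows $X$ then $r$-dominates $w$ as well, establishing redundancy of $w$. The only delicate points I foresee are the bookkeeping of the profile range that produces the $(2r+1)^t$ bound and, above all, the rerouting in the second case, where one must check that a shortest path meeting $S$ can be redirected through the common projection profile to reach every other vertex of the class within distance $r$.
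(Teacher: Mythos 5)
Your proposal is correct and follows essentially the same route as the paper, which itself defers to the standard domination-core argument of Lemma~11 in~\cite{DawarK09} and Lemma~4.1 in~\cite{siebertz2016polynomial} with the polynomial margins of Lemma~\ref{thm:uqw}: extract a $2r$-independent set $B$ modulo a small separator $S$, pigeonhole on $r$-projection profiles onto $S$ to find a class of size $k+2$, and show any vertex of that class is redundant by splitting witness paths into the $S$-avoiding case (handled by $2r$-independence and counting) and the through-$S$ case (handled by profile transfer). The arithmetic, the case analysis, and the running-time accounting all check out.
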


We iteratively apply \Cref{lem:findredundantvertex}
for at most $n$ times, until this is no longer possible. This yields the
following lemma. 

\begin{lemma}\label{lem:findcore}
Let $\CCC$ be a nowhere dense class of graphs and 
let $k,r\in\N$. There exists a polynomial~$q_r$ and a
polynomial time algorithm
  that, given a graph $G\in\CCC$ and $k\in\N$
either correctly concludes that $G$ cannot be $r$-dominated by a 
set of at most $k$ vertices, or finds a $(k,r)$-domination core $Z\subseteq V(G)$ of $G$ of size at most $q_r(k)$. 
\end{lemma}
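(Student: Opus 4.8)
The plan is to bootstrap \Cref{lem:findcore} directly from the single-step reduction in \Cref{lem:findredundantvertex}. The setup: we start with the trivial domination core $Z = V(G)$, which is a $(k,r)$-domination core since any set $X$ of size at most $k$ that $r$-dominates $V(G)$ trivially $r$-dominates $G$. We then repeatedly invoke \Cref{lem:findredundantvertex} to strip away redundant vertices one at a time, each removal preserving the $(k,r)$-domination core property.

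First I would set the stopping threshold. Let $\ell \coloneqq N_{2r}\bigl((k+2)(2r+1)^{t}\bigr)$ be the bound from \Cref{lem:findredundantvertex}, where $t = t_{2r}$. The loop runs as follows: while $|Z| > \ell$, apply \Cref{lem:findredundantvertex} to obtain a vertex $w \in Z$ such that every $X$ with $|X| \leq k$ $r$-dominates $Z$ if and only if it $r$-dominates $Z \setminus \{w\}$; then replace $Z$ by $Z \setminus \{w\}$. The key invariant to maintain is that $Z$ remains a $(k,r)$-domination core throughout. This follows by a short induction: if $Z$ is a $(k,r)$-domination core and we pass to $Z' = Z \setminus \{w\}$ with $w$ the vertex guaranteed by the lemma, then for any $D$ with $|D| \leq k$ that $r$-dominates $Z'$, the lemma gives that $D$ also $r$-dominates $Z$, and since $Z$ is a core, $D$ $r$-dominates all of $G$. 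Hence $Z'$ is again a $(k,r)$-domination core. Since each iteration strictly decreases $|Z|$, the loop terminates after at most $|V(G)| = n$ iterations, so the total running time is $n$ times the per-step time $\Oof(t \cdot \ell \cdot |E(G)|)$ from \Cref{lem:findredundantvertex}, which is polynomial in the input size for each fixed $r$.

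Upon termination, we have $|Z| \leq \ell = N_{2r}\bigl((k+2)(2r+1)^{t}\bigr)$. Since $N_{2r}$ is a polynomial by \Cref{thm:uqw} and $t = t_{2r}$ is a constant depending only on $r$ and $\CCC$, the quantity $\ell$ is bounded by a polynomial $q_r(k)$ in $k$ (of degree equal to $\deg N_{2r}$), giving the claimed size bound. The final point is to handle the alternative output: if $G$ cannot be $r$-dominated by any set of at most $k$ vertices, we must report this instead. To detect it, once $Z$ is reduced to size at most $\ell = q_r(k)$, we check whether $Z$ itself can be $r$-dominated by some set of at most $k$ vertices. Because $Z$ is a $(k,r)$-domination core, any such set $r$-dominates $G$, and conversely any set of size at most $k$ that $r$-dominates $G$ in particular $r$-dominates $Z$; so $G$ has a distance-$r$ dominating set of size at most $k$ if and only if $Z$ does. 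This equivalence lets us either certify the $(k,r)$-domination core $Z$ or conclude that no dominating set of size at most $k$ exists.

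I do not expect a genuine obstacle here, as the lemma is essentially the standard iterate-until-fixed-point packaging of \Cref{lem:findredundantvertex}; the only point requiring care is the invariant argument verifying that the core property is preserved under single-vertex removal, together with confirming that $\ell$ is indeed polynomial in $k$ for fixed $r$. The detection of the negative instance is a minor addition that follows immediately from the definition of a domination core and need not be interleaved with the reduction loop.
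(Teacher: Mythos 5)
Your main construction is exactly the paper's proof: the paper disposes of Lemma~\ref{lem:findcore} in a single sentence (``We iteratively apply Lemma~\ref{lem:findredundantvertex} for at most $n$ times, until this is no longer possible''), and your invariant argument --- that replacing $Z$ by $Z\setminus\{w\}$ for the vertex $w$ returned by Lemma~\ref{lem:findredundantvertex} preserves the $(k,r)$-domination core property, by composing the biconditional of that lemma with the core property of the previous $Z$ --- is the correct (and in the paper implicit) justification. Termination, the running time, and the size bound $|Z|\leq N_{2r}\bigl((k+2)(2r+1)^{t}\bigr)\eqqcolon q_r(k)$, polynomial in $k$ by Lemma~\ref{thm:uqw}, all match.

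The one genuine problem is your final ``detection'' step. Deciding whether $Z$ can be $r$-dominated by some set of at most $k$ vertices of $G$ is essentially the \textsc{Distance-$r$ Dominating Set} problem itself (an annotated variant with $|Z|$ elements to cover but up to $|V(G)|$ candidate dominators); it is $\mathsf{W}[2]$-hard in general and only fixed-parameter tractable on nowhere dense classes, so it cannot be carried out within the polynomial running time the lemma claims. Fortunately the step is unnecessary: the disjunction in the statement is satisfied by always outputting the core, because the core property is preserved by every removal regardless of whether $G$ admits a small dominating set --- it holds vacuously when it does not, since there is then no set $D$ of size at most $k$ that $r$-dominates $Z$. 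The paper resolves the positive/negative distinction externally: in the reconfiguration application the input set $D_s$ already witnesses that $G$ is $r$-dominated by $k$ vertices, as remarked immediately after the lemma. So you should simply drop the detection step; with it included as described, your algorithm is not polynomial-time.
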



We now define the annotated problem \textsc{Distance-$r$ $Z$-Dominating Set}, $r$-ZDS, as the problem to find on input
$(G,Z,k)$ a set $D$ with $Z\subseteq N_r(D)$. Such a set $D$
is called a \emph{$(Z,r)$-dominator}. By definition, if $Z$ is a $(k,r)$-domination 
core, then every $(Z,r)$-dominator of size 
at most $k$ corresponds to a distance-$r$ dominating set of
$G$. On the other hand, every distance-$r$ dominating set
of $G$ in particular dominates every subset $Z\subseteq V(G)$. 
%
We define the reconfiguration variant of the problem, 
$r$-ZDSR, in the obvious way. 


\begin{theorem}\label{thm:reconfig-ds}
Let $\CCC$ be a nowhere dense class of graphs and let $r\in \N$. 
Let $(G,k,D_s,D_t)$ be an instance of $r$-DSR, where $G\in \CCC$. 
We can 
compute in polynomial time a subgraph $G'\subseteq G$ with 
$D_s,D_k\subseteq G'$ and $Z\subseteq V(G')$ 
such that $(G,k,D_s,D_t)$ is a positive
instance of $r$-DSR if and only if $((G',Z),k,D_s,D_t)$ is a positive instance
of $r$-ZDSR and $G'$ has order polynomial in $k$. 
\end{theorem}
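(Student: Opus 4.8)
The strategy is to combine the domination-core machinery with a token-rerouting argument, in close analogy to the kernel for $r$-ISR. First I would shrink the set of vertices that must be dominated, and then I would shrink the set of potential token positions while keeping $G'$ a subgraph of $G$. The first step uses Lemma~\ref{lem:findcore} with parameter $k+1$ in place of $k$: since $D_s$ is a distance-$r$ dominating set of size $k\le k+1$, the graph $G$ can be $r$-dominated by at most $k+1$ vertices, so the algorithm does not report failure and returns a $(k+1,r)$-domination core $Z$ with $|Z|\le q_r(k+1)$, which is polynomial in $k$. By the definition of a $(k+1,r)$-domination core, a set of size at most $k+1$ is a $(Z,r)$-dominator if and only if it is a distance-$r$ dominating set of $G$; all single-token moves and the size bounds $k\le|D_j|\le k+1$ coincide, so the reconfiguration graphs of $r$-DSR on $G$ and of $r$-ZDSR on $(G,Z)$ are identical. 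This already gives the equivalence with the annotated problem on the same graph $G$, and it remains only to reduce the number of vertices.

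The key observation is that the $r$-projection profile $\rho^G_r[u,Z]$ determines exactly which vertices of $Z$ a vertex $u$ dominates: decomposing a shortest $u$--$z$ path at its first vertex of $Z$ shows that $\dist_G(u,z)\le r$ if and only if $\rho^G_r[u,Z](z_1)+\dist_G(z_1,z)\le r$ for some $z_1\in Z$, and the quantities $\dist_G(z_1,z)$ do not depend on $u$. I would therefore group $V(G)\setminus(Z\cup D_s\cup D_t)$ by projection profile onto $Z$; by Lemma~\ref{lem:projection-complexity} there are at most $\fproj(r,\epsilon)\cdot|Z|^{1+\epsilon}$ such classes. From each class I keep more than $k+1$ representatives (say $k+2$, or all of them if the class is smaller), discard the rest, and let $G'$ be the subgraph of $G$ induced by $Z$, by $D_s\cup D_t$, by the chosen representatives, and by the internal vertices of one shortest path of length at most $r$ from each retained vertex to every vertex of $Z$ that it dominates. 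All of these sets have polynomial size, so $|V(G')|$ is polynomial in $k$, and by construction every retained vertex dominates in $G'$ exactly the subset of $Z$ that it dominated in $G$.

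For the equivalence of $r$-ZDSR on $(G,Z)$ and on $(G',Z)$, the backward direction is immediate: since $G'\subseteq G$, every $(Z,r)$-dominator in $G'$ is one in $G$ and single-token moves are preserved, so any reconfiguration sequence in $G'$ lifts verbatim to $G$, where by the core property each configuration $r$-dominates all of $G$. For the forward direction I would simulate a reconfiguration sequence of $(Z,r)$-dominators in $G$ inside $G'$ by maintaining an injective assignment of the at most $k+1$ active tokens to positions of $G'$: a token on a retained vertex keeps its position, while a token on a discarded vertex of profile class $\pi$ is mapped to a currently unoccupied representative of class $\pi$. Representatives of class $\pi$ can only be occupied by tokens of class $\pi$, of which there are at most $k+1<k+2$, so a free representative always exists; and since such a representative dominates exactly the same part of $Z$ as the discarded vertex, every simulated configuration $r$-dominates $Z$ in $G'$. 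Each add/remove step in $G$ is realised by a single step in $G'$, and $D_s,D_t$ are retained and mapped to themselves, so the simulated sequence begins at $D_s$ and ends at $D_t$.

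The main work lies in this forward simulation, exactly as in the proof of Lemma~\ref{lem:irrelevantvertex}: I must check that the injective token-to-representative assignment can be maintained throughout the whole sequence without collisions while respecting both the single-move rule and the size window $[k,k+1]$, and that discarding vertices never destroys the domination of $Z$ by the surviving tokens, including $D_s$ and $D_t$. The latter is precisely the distance issue that makes a subgraph reduction delicate for domination rather than for independence, and it is resolved here not by preserving all distances but by explicitly retaining, inside $G'$, one short certifying path from every kept vertex to each vertex of $Z$ it dominates; this is what guarantees that projection profiles, and hence the entire domination behaviour, are preserved in the subgraph. Carrying out this bookkeeping, together with the size estimate $|V(G')|\le \fproj(r,\epsilon)\cdot q_r(k+1)^{1+\epsilon}\cdot(k+2)\cdot(r\cdot q_r(k+1)+1)+2k$, completes the proof.
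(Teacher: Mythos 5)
Your proposal is correct and follows essentially the same route as the paper: compute a domination core via Lemma~\ref{lem:findcore}, classify the remaining vertices by their $r$-projection profiles onto $Z$ (using that the profile determines $N_r(\cdot)\cap Z$, the paper's Claim~1), retain class representatives together with short certifying paths to $Z$ so that their domination behaviour is preserved in the subgraph, and translate reconfiguration sequences in both directions. Your two deviations --- working with a $(k{+}1,r)$-domination core so that the intermediate configurations of size $k+1$ are covered, and keeping $k+2$ representatives per class so that tokens can be assigned injectively --- are refinements of the same method rather than a different one; they in fact tighten two points the paper treats loosely, namely that its single-representative map $D\mapsto\{v_\kappa : v\in D\}$ can collapse tokens and fall below the size window $[k,k+1]$, and that a $(k,r)$-core formally only certifies dominators of size at most $k$.
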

\begin{proof}
We compute a $(k,r)$-domination core $Z\subseteq V(G)$
of size at most $q_r(k)$ using \Cref{lem:findcore}. 
Let $\epsilon>0$ and let $\fproj(r,\epsilon)$ be the 
function from \Cref{lem:projection-complexity}. 
According to the lemma there are at most 
$\fproj(r,\epsilon)\cdot |Z|^{1+\epsilon}$ different $r$-projections
to $Z$. Recall that
$u\cong_{Z,r} v\Leftrightarrow \rho_r^G[u,Z]=\rho_r^G[v,Z]$.
Now for each projection class $\kappa$ we choose a representative~$v_\kappa$ from that class. 

\smallskip
\setcounter{theorem}{0}
\begin{claim}
For all $u,v\in V(G)$, $\rho_r^G[u,Z]=\rho_r^G[v,Z]$ implies $N_r^G(u)\cap Z=N_r^G(v)\cap Z$.
\end{claim}
\begin{proof}\renewcommand{\qedsymbol}{$\lrcorner$}
Let $z\in N_r^G(u)\cap Z$ and let $P$ be a shortest path between $u$ and $z$. 
If $P$ is $Z$-avoiding we conclude from $\rho_r^G[u,Z]=\rho_r^G[v,Z]$ that there 
exists also a $Z$-avoiding path~$P'$ of the same length as $P$ between $v$ and $z$, 
which implies $z\in N_r^G(v)\cap Z$. Otherwise, let $z'$ be the vertex of $Z$ on $P$ which is closest to $u$
and let $Q$ be the initial part of $P$ between $u$ and~$z'$. Note that $Q$ is a shortest
path between $u$ and $z'$. By the same argument as above, we find a $Z$-avoiding 
path~$Q'$ of the same length as $Q$ between $v$ and $z'$. By replacing~$Q$ in~$P$ 
by~$Q'$ we obtain a path of the same length at $P$ between $v$ and $z$, which again
proves $z\in N_r^G(v)\cap Z$. 
\end{proof}

We now construct $G'$ such that it contains $D_s, D_t$, the 
set $Z$, all the representatives $v_\kappa$ and furthermore
a small set $T$ of vertices such that $N_r^G(v_\kappa)\cap Z=N_r^{G'}(v_\kappa)\cap Z$. 
The set~$T$ is constructed as follows. For each $v\in V(G)$, let $T_v$
be a breadth-first search tree with root~$v$ of depth $r$ which has 
elements of~$Z$ as its leaves. Clearly, $\dist_G(v,z)=\dist_{T_{v}}(v,z)$
for all $z\in N_r(v)\cap Z$. Let $T$ be the set $\bigcup_{\kappa}V(T_{v_\kappa})\cup \bigcup_{v\in D_s\cup D_t}V(T_v)$.
Hence for each $v_\kappa$ we have $N_r^{G'}(v_\kappa)\cap Z=N_r^{G}(v_\kappa)\cap Z$,
which immediately implies the next claim. 

\bigskip 

\begin{claim}
Let $D'$ be a $(Z,r)$-dominator in $G'$ which contains only representative
vertices~$v_\kappa$. Then $D'$ is also a $(Z,r)$-dominator in $G$. \hfill$\lrcorner$
\end{claim}

\medskip
We will always find $(Z,r)$-dominators of the above form. 

\medskip
\begin{claim}
Let $v\in V(G')$. Then there is $v_\kappa\in G'$ such that 
$N_r^{G'}(v)\cap Z\subseteq N_r^{G'}(v_\kappa)\cap Z$. 
\end{claim}
\begin{proof}\renewcommand{\qedsymbol}{$\lrcorner$}
Let $\kappa$ be the equivalence class of $v$ in 
the relation $\cong_{Z,r}$. Then 
$N_r^{G'}(v)\cap Z\subseteq N_r^{G'}(v_\kappa)\cap Z$. 
\end{proof}

\bigskip
Conversely, $(Z,r)$-dominators in $G$ can be translated to 
$(Z,r)$-dominators in $G'$. 
\begin{claim}
Let $D$ be a distance-$r$ dominating set in $G$. Then $D'=
\{v_\kappa : v\in D, v\cong_{Z,r} v_\kappa\}$ 
is a 
$(Z,r)$-dominator in $G'$.
\end{claim}
\begin{proof}\renewcommand{\qedsymbol}{$\lrcorner$}
As~$v_\kappa$ is chosen so that 
$\rho_r^G[v,Z]=\rho_r^G[v_\kappa,Z]$, by Claim 1
it holds that $N_r^G(v)\cap Z=N_r^G(v_\kappa)\cap Z$. Hence 
$Z\subseteq N_r^G(D)$ and $N_r^{G'}(v_\kappa)\cap Z=N_r^{G}(v_\kappa)\cap Z$
 implies that also $Z\subseteq N_r^{G'}(D')$. 
\end{proof}

We can now prove that the instance $((G',Z),k,D_s,D_t)$ of $r$-ZDSR is equivalent
to the instance $(G,k,D_s,D_t)$. If $D_1,\ldots, D_n$ is a valid reconfiguration sequence in 
$G$, then according to Claim 4 the corresponding sequence 
$D_1',\ldots, D_n'$ is also a valid reconfiguration sequence of 
$(Z,r)$-dominators in $G'$. 

Conversely, Let $D_1',\ldots,D_n'$ be a reconfiguration sequence of 
$(Z,r)$-dominators in $G'$. We first modify $D_i'$ such that it uses only
representative vertices $v_\kappa$, using Claim 3. Now according to 
Claim 2, $D_i'$ is also a $(Z,r)$-dominator in $G$. By definition of 
a $(k,r)$-domination core, $D_i'$ is a distance-$r$ dominating set in $G$. 

It remains to estimate the size of $G'$. According to \Cref{lem:findcore}, 
$Z$ has polynomial size at most $q_r(k)$. According to \Cref{lem:projection-complexity}
there are at most $\fproj(r,\epsilon)\cdot |Z|^{1+\epsilon}$ projection 
classes, hence we add at most so many vertices $v_\kappa$ to 
$G'$. Furthermore, each spanning tree~$T_{v_\kappa}$ has 
order at most $r\cdot |Z|$. Together with the $2k$ spanning
trees we add for $D_s$ and~$D_t$, we have
$|V(G')|\leq (\fproj(r,\epsilon)+2k)\cdot q_r(k)^{2+\epsilon}\cdot r$, which 
is polynomial for every fixed value of~$r$ and~$\epsilon$.
\end{proof}

The annoying fact that we reduce to an annotated version
of the problem can be dealt with by introducing a simple
gadget to $G'$. The same problem occurred also in the 
kernelization algorithms for \textsc{Distance-$r$ Dominating
Set} on bounded expansion and nowhere dense graph 
classes~\cite{drange2016kernelization, eickmeyer2016neighborhood}. 
We refer to these papers for the (very simple) 
details. 

\bigskip
We can find much smaller domination cores if we make a
further assumption on the dominating set size. The following
definition was first given in~\cite{drange2016kernelization}
and is also the basis for the kernelization of 
\textsc{Distance-$r$ Dominating Set} in~\cite{eickmeyer2016neighborhood}.

\medskip
\textbf{Minimum domination core.}
Let $G$ be a graph and $r\geq 1$. A set $Z\subseteq V(G)$ is a \emph{distance-$r$ domination core} 
if every set $D$ of minimum size that $r$-dominates $Z$ also $r$-dominates $G$.

\smallskip
The little change in the definition makes a large difference for the sizes of 
the respective cores, as the next lemma shows. 

\setcounter{theorem}{10}
\begin{lemma}[Eickmeyer et al.~\cite{eickmeyer2016neighborhood}]\label{lem:core2}
Let $\CCC$ be a nowhere dense class of graphs. There exists a function $\fcore(r,\epsilon)$ and a polynomial-time algorithm that, given a graph $G\in \CCC$, integer $k\in \N$ and $\epsilon>0$, 
either correctly concludes that $G$ cannot be $r$-dominated by $k$ vertices, or finds a distance-$r$ domination core $Z\subseteq V(G)$ of $G$ of size at most $\fcore(r,\epsilon)\cdot k^{1+\epsilon}$.
\end{lemma}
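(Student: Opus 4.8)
The plan is to imitate the iterative-removal strategy behind the $(k,r)$-domination core of \ifthenelse{\boolean{IsArxiv}}{\Cref{lem:findcore}}{Lemma~\ref{lem:findcore}}, but to exploit the much stronger constraint of restricting attention to \emph{minimum}-size dominators, which is exactly what should let the bound drop from a general polynomial to the almost-linear $\fcore(r,\epsilon)\cdot k^{1+\epsilon}$. First I would run the fixed-parameter algorithm for distance-$r$ domination to either report that $G$ cannot be $r$-dominated by $k$ vertices, or to certify that $Z=V(G)$ is $r$-dominated by at most $k$ vertices; the latter guarantees that throughout the procedure the minimum number of vertices $r$-dominating the current set $Z$ stays at most $k$. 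Starting from the trivial core $Z=V(G)$, I then repeatedly delete a \emph{redundant} vertex $w\in Z$, meaning one for which every dominator of $Z\setminus\{w\}$ of size at most the current minimum already $r$-dominates $w$. This definition is self-correcting: if $w$ is redundant then every minimum dominator of $Z\setminus\{w\}$ also $r$-dominates $w$, so it $r$-dominates $Z$, has size no larger than $\gamma_r(Z)$, and hence is itself a minimum dominator of $Z$; since $Z$ was a core it $r$-dominates $G$, so $Z\setminus\{w\}$ is again a core and the minimum size does not drop.

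The mechanism for detecting a redundant vertex would combine quasi-wideness with projection profiles, exactly along the lines of \ifthenelse{\boolean{IsArxiv}}{\Cref{lem:findredundantvertex}}{Lemma~\ref{lem:findredundantvertex}}. If $\abs{Z}$ exceeds the quasi-wideness threshold, I invoke \ifthenelse{\boolean{IsArxiv}}{\Cref{thm:uqw}}{Lemma~\ref{thm:uqw}} for parameter $2r$ to extract a small separator $S$ with $\abs{S}\leq t$ together with a large $2r$-independent set $B\subseteq Z\setminus S$ in $G-S$, and I group $B$ according to its projection profile to $S$ (at most $(2r+1)^{t}$ classes), so that some class contains more than $k$ elements. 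For a vertex $w$ in such a class and any size-$\leq\!\gamma_r$ dominator $D$ of $Z\setminus\{w\}$, each class-mate $w'$ is $r$-dominated by some $d\in D$ either along a path through $S$ or along a path inside $G-S$. In the first case $d$ reaches $w'$ via some $s\in S$, and since $w$ has the same profile to $S$ it reaches $w$ as well, so $D$ dominates $w$; in the second case the $2r$-independence of $B$ in $G-S$ forces distinct dominators for distinct class-mates, which is impossible once the class has more than $\abs{D}$ vertices. Hence $w$ is redundant and can be deleted.

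The exchange argument just sketched already yields a \emph{polynomial} core, because the quasi-wideness threshold is governed by the polynomial $N_{2r}$; the step I expect to be the real obstacle is squeezing this down to the almost-linear $\fcore(r,\epsilon)\cdot k^{1+\epsilon}$, which is the genuine contribution of Eickmeyer et al. Here one can no longer afford to route the counting through $N_{2r}$, and the polynomial bound must be replaced by the neighborhood/projection-complexity estimate of \ifthenelse{\boolean{IsArxiv}}{\Cref{lem:projection-complexity}}{Lemma~\ref{lem:projection-complexity}}. Concretely, in a \emph{minimal} core every $w\in Z$ carries a witnessing minimum dominator $D_w$ of $Z\setminus\{w\}$ that fails to $r$-dominate $w$; by the profile argument above the relevant behaviour of $w$ is captured by its interaction type with such forced dominators, and the number of realizable types is almost-linear in the dominator size $\abs{D_w}\leq k$ by \ifthenelse{\boolean{IsArxiv}}{\Cref{lem:projection-complexity}}{Lemma~\ref{lem:projection-complexity}}. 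The delicate point, absent from the size-$\leq k$ setting of \ifthenelse{\boolean{IsArxiv}}{\Cref{lem:findcore}}{Lemma~\ref{lem:findcore}}, is that the neighborhood-complexity bound must be applied uniformly to the whole family of minimum dominators rather than to a single fixed set $A$, and that this must be done while keeping each removal step, whose cost is dominated by \ifthenelse{\boolean{IsArxiv}}{\Cref{thm:uqw}}{Lemma~\ref{thm:uqw}} at time $\Oof(\abs{Z}\cdot\abs{E(G)})$, and the total of at most $\abs{V(G)}$ iterations within polynomial time.
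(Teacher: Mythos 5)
First, note that the paper does not prove this statement at all: Lemma~\ref{lem:core2} is imported as a black box from Eickmeyer et al.~\cite{eickmeyer2016neighborhood}, so there is no in-paper proof to compare against. Judged on its own terms, your proposal correctly sets up the outer loop: start from $Z=V(G)$, repeatedly delete a vertex $w$ such that every minimum-size $r$-dominator of $Z\setminus\{w\}$ still $r$-dominates $w$, and your self-correcting argument that this preserves both the core property and the domination number of $Z$ is sound. But the concrete removal mechanism you give --- apply Lemma~\ref{thm:uqw} to $Z$ with parameter $2r$, split the resulting $2r$-independent set by projection profiles onto the separator $S$, and declare a vertex in a class of size greater than $k$ redundant --- is exactly the mechanism of Lemma~\ref{lem:findredundantvertex}, and it only terminates once $\abs{Z}$ drops below the quasi-wideness threshold $N_{2r}(\cdot)$, which is a polynomial whose degree depends on $r$. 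That reproduces Lemma~\ref{lem:findcore}, not the present statement.

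The almost-linear bound $\fcore(r,\epsilon)\cdot k^{1+\epsilon}$ is the entire content of this lemma beyond Lemma~\ref{lem:findcore}, and your treatment of it is a gap that you flag yourself but do not close. The gesture toward Lemma~\ref{lem:projection-complexity} does not work as stated: that lemma bounds the number of $r$-projection profiles onto \emph{one fixed} vertex subset $A$, whereas you would need to control ``interaction types'' with the entire (unknown, potentially exponentially large) family of minimum dominators $D_w$. The actual argument of Eickmeyer et al.\ sidesteps this by first computing a single concrete reference set: an approximate $r$-dominator $D$ of $Z$ of size almost linear in $k$, then its $r$-closure $\cl_r(D)$ as in Lemma~\ref{lem:closure}, and classifying $Z$ by projection profiles onto that closure; the bound of Lemma~\ref{lem:projection-complexity} applied to this one set is what drives the $k^{1+\epsilon}$ threshold, and the redundant vertex is extracted from a large projection class by exploiting that any vertex dominating it must either lie close to it outside the closure or route through its small projection $M_r^G(\cdot,\cl_r(D))$. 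Without this approximation-plus-closure step, your argument cannot get below the polynomial barrier imposed by $N_{2r}$.
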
 

If we make the assumption that the source and target sets
$D_s$ and $D_t$ are of minimum size, we can work with 
the improved bounds of \Cref{lem:core2} instead of the 
polynomial bounds of \Cref{lem:findcore}. The final obstacle is 
to better control the sizes of the trees $T_{v_\kappa}$ that we 
add to the graph $G'$ in the construction of \Cref{thm:reconfig-ds}. 
For this, we need the following two lemmas. 

\begin{lemma}[Lemma 2.9 of~\cite{drange2016kernelization}, adjusted (see Lemma 8 of~\cite{eickmeyer2016neighborhood})]\label{lem:closure}
There exists a function $\fcl(r,\epsilon)$ and a polynomial-time algorithm that, given $G\in \CCC$, $X\subseteq V(G)$, $r\in \N$, and $\epsilon>0$, computes the {\em{$r$-closure}} of~$X$, denoted $\cl_r(X)$
with the following properties. 
\begin{itemize}
  \item $X\subseteq \cl_r(X)\subseteq V(G)$;
  \item $|\cl_r(X)|\leq \fcl(r,\epsilon)\cdot |X|^{1+\epsilon}$; and
  \item $|M_r^G(u,\cl_r(X))|\leq \fcl(r,\epsilon)\cdot |X|^{\epsilon}$ for each $u\in V(G)\setminus \cl_r(X)$.
\end{itemize}
\end{lemma}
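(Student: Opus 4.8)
The plan is to compute $\cl_r(X)$ by a greedy absorption process. I would initialize $Y := X$ and then repeatedly look for a vertex $u \in V(G) \setminus Y$ whose $r$-projection onto the current set is \emph{heavy}, meaning $\abs{M_r^G(u, Y)} > \tau$ for a threshold $\tau := \fcl(r,\epsilon)\cdot \abs{X}^{\epsilon}$ to be fixed. As long as such a $u$ exists I add it to $Y$ (together with, if needed, a bounded witness structure such as a single shortest $Y$-avoiding path realizing part of its projection, as in the construction of Theorem~\ref{thm:reconfig-ds}), and I repeat. When no heavy vertex remains I output the final set as $\cl_r(X)$. The first and third properties of the lemma then hold essentially by construction; the entire difficulty is to show that the process stops after few rounds, so that the second property $\abs{\cl_r(X)} \leq \fcl(r,\epsilon)\cdot\abs{X}^{1+\epsilon}$ holds.

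The heart of the argument is a quantitative statement that \emph{in a nowhere dense class few vertices can have heavy projection onto a fixed set}. Concretely I would prove: there is a constant depending only on $r$ and $\epsilon$ such that for every $G \in \CCC$ and every $A \subseteq V(G)$, the number of vertices $u \in V(G) \setminus A$ with $\abs{M_r^G(u, A)} > \tau$ is at most roughly $\fcl(r,\epsilon)\cdot\abs{A}^{1+\epsilon}/\tau$. I derive this from uniform quasi-wideness (Lemma~\ref{thm:uqw}) applied with parameter $2r$: if too many vertices had heavy projection onto $A$, I first apply Lemma~\ref{thm:uqw} to the set of these heavy vertices to extract, after deleting a separator $S$ with $\abs{S}\leq t_{2r}$, a $2r$-independent subset $B$ of size $m$. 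The vertices of $B$ then have pairwise disjoint $r$-neighborhoods in $G-S$, and their large projections (shared across a common profile after grouping by Lemma~\ref{lem:projection-complexity}) route $A$-avoiding paths of length at most $r$ into vertex-disjoint connected subgraphs of radius at most $r$, realizing a depth-$2r$ minor of a large complete bipartite graph and hence of a large clique at bounded depth. Once $m$ exceeds the nowhere-denseness threshold this contradicts $K_{t(2r)}\not\minor_{2r}G$, pinning down the polynomial bound on the number of heavy vertices.

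Granting this sub-lemma, termination and the size bound follow by amortization. Maintaining the invariant that the current set has size $\Oof(\abs{X}^{1+\epsilon})$, the number of vertices that can still be heavy relative to it is at most $\fcl(r,\epsilon)\cdot\abs{X}^{(1+\epsilon)^2-\epsilon}$; splitting the global error budget $\epsilon$ into a fixed number of pieces before the iteration starts (the standard trick of replacing $\epsilon$ by $\epsilon/c(r)$) keeps every exponent strictly below $1+\epsilon$, so only a bounded number of rounds occur and each adds $\Oof(r\cdot\abs{X}^{1+\epsilon})$ vertices. The procedure runs in polynomial time since locating a heavy vertex and computing a projection reduce to breadth-first searches from each candidate, exactly as in Lemma~\ref{thm:uqw}.

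The step I expect to be the main obstacle is the quantitative heavy-projection sub-lemma, specifically turning \emph{many vertices sharing a large projection} into a genuine forbidden depth-$r$ minor. The delicate points are controlling the separator $S$ so that the extracted set $B$ is simultaneously $2r$-independent and still carries heavy projections, and routing the many $A$-avoiding paths into \emph{vertex-disjoint} radius-at-most-$r$ subgraphs so that they truly form the branch sets of a clique minor at depth $2r$ rather than merely overlapping paths. The secondary obstacle is purely arithmetic: bookkeeping the exponents across the iteration so that the final bounds land exactly at $\abs{X}^{1+\epsilon}$ and $\abs{X}^{\epsilon}$ rather than at a larger polynomial, which is what forces the $\epsilon$-splitting to be fixed up front.
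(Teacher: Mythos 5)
First, note that the paper does not prove this statement: it is imported (as Lemma~\ref{lem:closure}) from \cite{drange2016kernelization} and \cite{eickmeyer2016neighborhood}, so your attempt can only be measured against those sources. Your skeleton does match theirs: the closure is built by repeatedly absorbing a vertex whose $r$-projection onto the current set exceeds a threshold of order $|X|^{\epsilon}$, the first and third bullets then hold by construction, and everything reduces to a counting lemma asserting that few vertices can have heavy projection onto a fixed set $A$.

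The gap is in your proof of that counting lemma, and it is not merely the technical obstacle you flag at the end but a wrong choice of tool. Uniform quasi-wideness (Lemma~\ref{thm:uqw}) cannot deliver the bound by the contradiction you describe: a disjoint union of stars, each with center outside $A$ and $\tau$ leaves inside $A$, is a forest in which $|A|/\tau$ vertices are heavy, pairwise at distance greater than $2r$, and share no projection targets --- so extracting a large $2r$-independent set $B$ of heavy vertices produces no forbidden shallow minor, and no amount of careful routing will create one, because none exists. Pigeonholing first into projection classes via Lemma~\ref{lem:projection-complexity} (so that the vertices of $B$ share a common target set) does not rescue the construction either: the witnessing $A$-avoiding paths live in $G$, not in $G-S$, so the separator $S$ gives you no control over their pairwise intersections, and the branch sets needed for a large complete bipartite depth-$2r$ minor need not be realizable disjointly. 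The actual argument in \cite{drange2016kernelization,eickmeyer2016neighborhood} uses the other facet of nowhere denseness --- the $\Oof(n^{\epsilon})$ bound on the edge density of depth-$r$ minors, equivalently on generalized coloring numbers --- applied globally: one builds a single depth-$r$ minor on roughly $A\cup U$ (where $U$ is the set of heavy vertices) whose branch sets are assembled greedily from the two halves of each projection path so as to be vertex-disjoint, charges every heavy projection pair to an edge of this minor, and concludes $\tau\cdot|U|\leq \fcl(r,\epsilon)\cdot(|A|+|U|)^{1+\epsilon}$, which yields the desired count. Your amortization over rounds and the up-front splitting of $\epsilon$ are fine once that counting lemma is available, but as written the central sub-lemma is unproved and the route you propose to it fails.
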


We can now compute a breadth-first search tree with root $v_\kappa$ of depth at most
$r$ which stops whenever it first encounters a vertex of $\cl_r(Z)$. This gives us a tree $T_{v_\kappa}$ 
of size at most $\fcl(r,\epsilon)\cdot |Z|^\epsilon\cdot r$. However, 
as the breadth-first search does not
continue when meeting $\cl_r(Z)$, we now have to connect the vertices of $\cl_r(Z)$ with minimum length
paths (up to length $r$) to preserve all distances. This is possible as the next lemma shows. 

\begin{lemma}[Lemma 2.11  of~\cite{drange2016kernelization}, adjusted (see Lemma 9 of~\cite{eickmeyer2016neighborhood})]\label{lem:pathsclosure}
There is a function $\fpaths(r,\epsilon)$ and a polynomial-time algorithm which on input $G\in \CCC$, $X\subseteq V(G)$, $r\in \N$, and $\epsilon>0$, 
computes a superset $X'\supseteq X$ of vertices with the following properties:
\begin{itemize}
\item whenever $\dist_G(u,v)\leq r$ for $u,v\in X$, then $\dist_{G[X']}(u,v)=\dist_G(u,v)$; and
\item $|X'|\leq \fpaths(r,\epsilon)\cdot |X|^{1+\epsilon}$.
\end{itemize}
\end{lemma}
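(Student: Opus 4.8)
The plan is to preserve all pairwise distances up to $r$ among $X$ by a recursion that halves the radius, reducing the task of preserving distances $\le r$ to that of preserving distances $\le \lceil r/2\rceil$ at the cost of adding a single layer of \emph{midpoints}. Concretely, for a working set $Y$ (initially $Y=X$) and radius $r'$ (initially $r'=r$), I would select, for every pair $u,v\in Y$ with $\dist_G(u,v)=d\le r'$, a vertex $m$ on a shortest $u$--$v$ path at distance $\lceil d/2\rceil$ from $u$; such a midpoint satisfies $\dist_G(u,m)\le\lceil r'/2\rceil$ and $\dist_G(m,v)\le\lceil r'/2\rceil$. Collecting the chosen midpoints into a set $M$ and recursing on $(Y\cup M,\lceil r'/2\rceil)$, the induced subgraph on the final vertex set contains, for each original pair $u,v$, a concatenation of two half-paths of total length $d$. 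Since distances in an induced subgraph can only exceed those in $G$, such a path is automatically shortest, so the distance is preserved exactly. The base case $r'\le 1$ is trivial, because adjacency and equality are preserved in any induced subgraph, and the recursion terminates after $\Oof(\log r)$ levels.

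The crucial point, and the main obstacle, is that a naive midpoint set can be quadratic in $|Y|$, whereas every layer must stay almost-linear. Here I would invoke the almost-linear bound on projection complexity from Lemma~\ref{lem:projection-complexity}: classifying candidate midpoints by their $\lceil r'/2\rceil$-projection profile $\rho^G_{\lceil r'/2\rceil}[\cdot,Y]$ onto $Y$ yields at most $\fproj(\lceil r'/2\rceil,\epsilon)\cdot|Y|^{1+\epsilon}$ classes, and two vertices in one class reach the same vertices of $Y$ by $Y$-avoiding paths of exactly the same lengths. Keeping one representative per profile therefore gives a midpoint set $M$ of almost-linear size that still supplies, for each pair $u,v$, a representative reaching both $u$ and $v$ at the required distances. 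The technical subtlety to resolve is that a shortest $u$--$v$ path, hence its half-segments, need not be $Y$-avoiding; when a half-segment meets some $z\in Y$ one must split at $z$ and charge the sub-distances $\dist_G(u,z)$ and $\dist_G(z,m)$ separately, which are themselves preserved by the recursion. Alternatively, I would first replace $Y$ by its $r$-closure $\cl_r(Y)$ from Lemma~\ref{lem:closure}, so that every outside vertex has only $\fcl(r,\epsilon)\cdot|Y|^{\epsilon}$ projection values to manage, making the bookkeeping on excursions uniform.

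Finally, I would assemble the size bound and control the accumulation of the error exponent. Each of the $\Oof(\log r)$ levels multiplies the terminal-set size by a factor of the form $\fproj(\cdot,\delta)\cdot|Y|^{\delta}$ for a chosen $\delta>0$, and composing these over the fixed number of levels gives a total bound $g(r,\delta)\cdot|X|^{1+c(r)\delta}$ for some constant $c(r)$ depending only on $r$. Since $r$ is fixed, I would set $\delta\coloneqq\epsilon/c(r)$ at the outset so that the final exponent is exactly $1+\epsilon$, and define $\fpaths(r,\epsilon)$ accordingly. All operations---computing projection profiles, choosing representatives, selecting shortest paths and midpoints, and the closure computation of Lemma~\ref{lem:closure}---run in polynomial time, so the whole procedure is polynomial, as required.
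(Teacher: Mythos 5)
This lemma is imported verbatim from \cite{drange2016kernelization} (as adjusted in \cite{eickmeyer2016neighborhood}); the paper gives no proof of its own, so the only meaningful comparison is with the cited sources. Your reconstruction --- recursively halving the radius, reducing to internally $Y$-avoiding shortest paths by splitting at intermediate vertices of $Y$, trimming the quadratically many candidate midpoints to one representative per $\lceil r'/2\rceil$-projection profile via Lemma~\ref{lem:projection-complexity} (which pins the two half-distances of the representative by the equality $a'+b'=d$), and absorbing the $\Oof(\log r)$ levels into the exponent by taking $\delta=\epsilon/c(r)$ --- is correct and is essentially the argument used in those references.
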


We can now prove the following theorem. 

\begin{theorem}\label{thm:reconfig-ds2}
Let $\CCC$ be a nowhere dense class of graphs and let $r\in \N$. 
Let $(G,D_s,D_t,k)$ be an instance of $r$-DSR, where $G\in \CCC$
and where $D_s$ and $D_t$ are minimum \mbox{distance-$r$} dominating
sets in~$G$. 
There is a function $\fker(r,\epsilon)$ and a polynomial-time algorithm
which on input $(G,D_s,D_t,k)$ computes a subgraph $G'\subseteq G$ with 
$D_s,D_k\subseteq G'$ and $Z\subseteq V(G')$ 
such that $(G,k,D_s,D_t)$ is a positive
instance of $r$-DSR if and only if $((G',Z),k,D_s,D_t)$ is a positive instance
of $r$-ZDSR and $G'$ has order at most $\fker(r,\epsilon)\cdot k^{1+\epsilon}$. 
\end{theorem}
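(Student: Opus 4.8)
The plan is to follow the construction in the proof of Theorem~\ref{thm:reconfig-ds} almost verbatim, but to replace its two size-critical ingredients by almost-linear substitutes, exploiting that $D_s$ and $D_t$ are minimum. First I would invoke Lemma~\ref{lem:core2} in place of Lemma~\ref{lem:findcore}: since $D_s$ witnesses that $G$ is $r$-dominated by $k$ vertices, the algorithm returns a distance-$r$ domination core $Z\subseteq V(G)$ with $|Z|\leq \fcore(r,\epsilon)\cdot k^{1+\epsilon}$. A preliminary observation records that the minimum size of a set $r$-dominating $Z$ equals the minimum size of a set $r$-dominating $G$, namely $k$: every minimum $(Z,r)$-dominator $r$-dominates $G$ by the core property and hence has size at least $k$, while $D_s$ shows the minimum is at most $k$. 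As before I then classify the vertices of $V(G)\setminus Z$ by their $r$-projection profile $\rho_r^G[\cdot,Z]$ onto $Z$, obtaining at most $\fproj(r,\epsilon)\cdot|Z|^{1+\epsilon}$ classes by Lemma~\ref{lem:projection-complexity} together with a representative $v_\kappa$ for each; Claim~1 of Theorem~\ref{thm:reconfig-ds} again guarantees $N_r^G(v_\kappa)\cap Z=N_r^G(v)\cap Z$ for every $v$ in class $\kappa$.

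The remaining task is to wire each representative to its $Z$-neighbourhood without paying the cost $r\cdot|Z|$ per tree incurred in Theorem~\ref{thm:reconfig-ds}. Here I would first compute the $r$-closure $Z^\ast=\cl_r(Z)$ via Lemma~\ref{lem:closure}, so that $|Z^\ast|\leq \fcl(r,\epsilon)\cdot|Z|^{1+\epsilon}$ and, crucially, every vertex $u\notin Z^\ast$ satisfies $|M_r^G(u,Z^\ast)|\leq \fcl(r,\epsilon)\cdot|Z|^{\epsilon}$. For each representative $v_\kappa$ I retain a shortest-path tree of depth at most $r$ connecting $v_\kappa$ to the vertices of its projection $M_r^G(v_\kappa,Z^\ast)$; since this projection has at most $\fcl(r,\epsilon)\cdot|Z|^{\epsilon}$ vertices, the tree has at most $\fcl(r,\epsilon)\cdot|Z|^{\epsilon}\cdot r+1$ vertices. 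To repair distances to $Z$ that these truncated trees destroy, I apply Lemma~\ref{lem:pathsclosure} to $Z^\ast$, adding a superset $X'\supseteq Z^\ast$ of size at most $\fpaths(r,\epsilon)\cdot|Z^\ast|^{1+\epsilon}$ in which all pairwise distances up to $r$ among $Z^\ast$-vertices are preserved. The graph $G'$ is then induced by $X'$, the representatives, their trees, and the analogous trees for $D_s\cup D_t$. A shortest $G$-path of length at most $r$ from $v_\kappa$ to some $z\in Z$ splits at its first $Z^\ast$-vertex $z'$ into a $Z^\ast$-avoiding part captured by the tree and a part of length at most $r$ inside $Z^\ast$ recovered by $X'$; as $G'\subseteq G$ this yields $N_r^{G'}(v_\kappa)\cap Z=N_r^G(v_\kappa)\cap Z$, so Claims~2--4 of Theorem~\ref{thm:reconfig-ds} carry over and $(Z,r)$-dominators translate faithfully between $G$ and $G'$.

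The genuinely new point, and the step I expect to be the main obstacle, is that Lemma~\ref{lem:core2} certifies only that \emph{minimum} $(Z,r)$-dominators $r$-dominate $G$, whereas a reconfiguration sequence for $r$-DSR passes through sets of size $k+1$. I would resolve this using the structure of the sequence together with the endpoint condition $|D_s|=|D_t|=k$: since consecutive solutions have symmetric difference at most one, two consecutive sets of equal size must coincide, so every set of size $k+1$ occurring in a nontrivial sequence is a superset of an adjacent set of size $k$. Each such size-$k$ set $r$-dominates $Z$ and, by the preliminary observation, is therefore a \emph{minimum} $(Z,r)$-dominator, hence $r$-dominates $G$ by the core property; a superset of a distance-$r$ dominating set is again one, so the size-$(k+1)$ sets $r$-dominate $G$ as well. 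This makes the backward translation of a reconfiguration sequence in $G'$ produce genuine distance-$r$ dominating sets of $G$ at every step, while the forward translation is exactly as in Theorem~\ref{thm:reconfig-ds}. Finally I would collect the size bound: $|V(G')|$ is dominated by $|X'|$ together with the at most $\fproj(r,\epsilon)\cdot|Z|^{1+\epsilon}$ trees of size $\Oof(r\cdot\fcl(r,\epsilon)\cdot|Z|^{\epsilon})$ each, giving $\Oof(|Z|^{1+\Oof(\epsilon)})=\Oof(k^{1+\Oof(\epsilon)})$; after rescaling $\epsilon$ and absorbing all $r$- and $\epsilon$-dependent constants into $\fker(r,\epsilon)$, this yields the claimed bound $\fker(r,\epsilon)\cdot k^{1+\epsilon}$.
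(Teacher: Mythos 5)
Your proposal matches the paper's proof essentially step for step: replace Lemma~\ref{lem:findcore} by Lemma~\ref{lem:core2}, compute the $r$-closure of $Z$ and then the path-closure of that set via Lemmas~\ref{lem:closure} and~\ref{lem:pathsclosure}, truncate the representatives' BFS trees at the closure so that each tree has only $\Oof(r\cdot|Z|^{\epsilon})$ vertices, and rescale $\epsilon$ at the end. Your explicit handling of the size-$(k+1)$ sets in the sequence (each is a superset of an adjacent size-$k$ set, which is a minimum $(Z,r)$-dominator by your preliminary observation and hence $r$-dominates $G$) is correct and in fact fills in a detail that the paper leaves implicit behind ``the rest of the proof works exactly as before''.
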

\begin{proof}
The proof parallels that of \Cref{thm:reconfig-ds}. We compute a 
distance-$r$ domination core $Z\subseteq V(G)$ using 
\Cref{lem:core2} instead of \Cref{lem:findcore} (with parameter $\epsilon'$ which will 
be determined in the course of the proof). Now, using \Cref{lem:closure}, we compute \mbox{$Z'=\cl_r(Z)$} and using \Cref{lem:pathsclosure}
we compute $Z''\supseteq Z'$ such that 
whenever $\dist_G(u,v)\leq r$ for \mbox{$u,v\in Z'$}, then $\dist_{G[Z'']}(u,v)=\dist_G(u,v)$. 
We classify the elements of $V(G)\setminus Z'$
according to their $r$-projections to $Z'$, that is, we define
$u\cong_{Z',r} v\Leftrightarrow \rho_r^G[u,Z']=\rho_r^G[v,Z']$.

We now construct $G'$ such that it contains $D_s, D_t$, the 
set $Z''$, all the representatives~$v_\kappa$ and furthermore
a small set $T$ of vertices such that $N_r^G(v_\kappa)\cap Z'=N_r^{G'}(v_\kappa)\cap Z'$. 
The set~$T$ is constructed as follows. For each $v\in V(G)$, let $T_v$
be a breadth-first search tree with root~$v$ of depth $r$ which does not continue
when meeting $Z$ for the first time. The crucial claim is the following. 

\setcounter{theorem}{0}
\medskip
\begin{claim}
Let $v_\kappa$ be a representative vertex. Then $\dist_G(v_\kappa,z)=\dist_{G'}(v_\kappa,z)$
for all $z\in N_r^G(v_\kappa)\cap Z$.
\end{claim}
\begin{proof}\renewcommand{\qedsymbol}{$\lrcorner$}
Let $z\in N_r^G(v_\kappa)\cap Z$ and let $P$ be a minimum length path between $v_\kappa$ 
and $z$. Let $z'$ be the first vertex on $P$ which belongs to $Z'$. Then we have 
$\dist_G(v_\kappa,z')=\dist_{T_{v_\kappa}}(v_\kappa,z')$. Now by construction 
of $Z''$ we have $\dist_G(z',z)=\dist_{G'}(z',z)$, which implies the claim.
 \end{proof}

\medskip
The rest of the proof works exactly as the proof of \Cref{thm:reconfig-ds}. Let
us determine a bound on the size of $G'$, which also determines our initial choice of
$\epsilon'$. The set $Z$ has size at most $\fcore(r,\epsilon')\cdot k^{1+\epsilon'}$. 
According to \Cref{lem:projection-complexity}
there are at most $\fproj(r,\epsilon')\cdot |Z|^{1+\epsilon'}$ projection 
classes, hence we add at most so many vertices $v_\kappa$ to 
$G'$. The set $Z'$ has size at most $\fproj(r,\epsilon')\cdot |Z|^{1+\epsilon'}$
according to \Cref{lem:closure} and the set $Z''$ has size at most 
$\fpaths(r,\epsilon')\cdot |Z'|^{1+\epsilon}$ according to 
\Cref{lem:pathsclosure}. Now, each tree $T_v$ has order 
at most $|Z'|^{\epsilon'}\cdot r$. Hence in total we have
\begin{align*}
V(G')| & \leq |Z''|+(\fproj(r,\epsilon')\cdot |Z|^{1+\epsilon'}+2)\cdot |Z'|^{\epsilon'}\cdot r\\
& \eqqcolon \fker(r,\epsilon)\cdot k^{1+\epsilon} 
\end{align*}
for an appropriately chosen function $f_{\mathrm{ker}}$ and $\epsilon'$.
\end{proof}

Observe that the constructed kernel preserves shortest reconfiguration
sequences. 

\subsection{Lower bounds}

\setcounter{theorem}{14}
\begin{theorem}
Let $\CCC$ be somewhere dense and closed under taking
subgraphs. Then there is $r\in \N$ such that 
\textsc{Distance-$r$ Dominating Set 
Reconfiguration} is $\mathsf{W}[2]$-hard on~$\CCC$. 
\end{theorem}
\begin{proof}
The proof works in principle as the proof of 
\Cref{thm:isw1}. Again, let $s\in \N$ be the number such that according to 
\Cref{lem:subdiv} for all graphs $G$ we have $G_s\in\CCC$. 
We reduce $1$-\textsc{DSR} to $(3s)$-\textsc{DSR} on $\CCC$ 
by finding an appropriate subdivision of a graph in which dominating
sets are translated $1$-to-$1$ to distance-$(3s)$ dominating sets. 
Here we can directly use the reduction from set cover to 
distance-$r$ dominating set from \cite{drange2016kernelization}, where we use the fact that
dominating set and set cover are equivalent problems (just define 
the set system consisting of the neighborhoods of all vertices). 
Now use that the reconfiguration variant of dominating
set is $\mathsf{W}[2]$-hard \cite{mouawad2017parameterized}.
\end{proof}

\section{Conclusion}

The study of computationally hard problems on restricted classes
of inputs is a very fruitful line of research in algorithmic graph structure
theory and in particular in parameterized complexity theory. This
research is based on the observation that many problems such as
\textsc{Dominating Set}, which are considered intractable in general,
can be solved efficiently on restricted graph classes. Of course it 
is a very desirable goal in this line of research to identify the most
general classes of graphs on which certain problems 
can be solved efficiently. In this work we were able to identify
the exact limit of tractability for the reconfiguration variants of 
the distance-$r$ independent set problem and distance-$r$ dominating 
set problem on subgraph closed graph classes. Clearly, the main open
question is to identify the most general graph classes which are not subgraph closed
on which these problems admit efficient solutions. 

\pagebreak
\bibliographystyle{plainurl}

\end{document}